\documentclass[11pt]{article}

\usepackage{amsfonts}
\usepackage{amsmath,amsthm}
\usepackage{graphicx}

\usepackage{mathrsfs}
\usepackage{mathtools}

\usepackage{amssymb}
\usepackage{latexsym}

% Various Theorem-like environments
\newtheorem{theorem}{Theorem}[section]

\newtheorem{lemma}[theorem]{Lemma}
\newtheorem{proposition}[theorem]{Proposition}

\DeclareMathAlphabet{\mathpzc}{OT1}{pzc}{m}{it}
\DeclareMathAlphabet\EuFrak{U}{euf}{m}{n}	%  the Bold Euler Fraktur
\SetMathAlphabet\EuFrak{bold}{U}{euf}{b}{n}	%  gothic font

\parskip2pt
\mathsurround1pt

\oddsidemargin14pt
\textwidth430pt
\textheight590pt

% Alcune sequenze molto usate:

%bf

\newcommand{\bsx}{{\boldsymbol x}}
\newcommand{\bsy}{{\boldsymbol y}}

\newcommand{\bsp}{{\boldsymbol p}}

\def\CC{\mathbb{C}}

\def\RR{\mathbb{R}}

% Caratteri it

\newcommand{\ie}{{\it{i.e.\ }}}
\newcommand{\eg}{{\it{e.g.\ }}}

% Lettere greche

% caratteri calligrafici:

\newcommand{\mC}{\mathcal C}
\newcommand{\mD}{\mathcal D}

\newcommand{\mL}{\mathcal L}

\newcommand{\mO}{\mathcal O}

% Caratteri gotici:

\newcommand{\efV}{\EuFrak{V}}
\newcommand{\efW}{\EuFrak{W}}

%****************** Simboli Script *************************

% Caligraphic letters
\newcommand{\cC}{{\cal C}}
\newcommand{\cD}{{\cal D}}

% Simboli particolari

%\newcommand{\bo}{{\partial_0 b}}
%\newcommand{\bl}{{\partial_1 b}}

% simboli bold 

\newcommand{\bex}{\boldsymbol{e}}
\newcommand{\bfy}{\boldsymbol{f}}
\newcommand{\bg}{\boldsymbol{g}}

\newcommand{\bl}{\boldsymbol{l}}
\newcommand{\bm}{\boldsymbol{m}}
\newcommand{\bn}{\boldsymbol{n}}
\newcommand{\bp}{\boldsymbol{p}}

\newcommand{\bx}{\boldsymbol{x}}
\newcommand{\by}{\boldsymbol{y}}

\newcommand{\bA}{\boldsymbol{A}}

\newcommand{\bnabla}{\boldsymbol{\nabla}}

\newcommand{\brho}{{\boldsymbol{\rho}}}
%*********************** Matrici e vettori *********************

%**************** Titolo *******************************************

% ambienti newtheorem

\theoremstyle{definition}

\theoremstyle{remark}

\numberwithin{equation}{section}

\newcommand{\be}{\begin{equation}}
\newcommand{\ee}{\end{equation}}

%***************************************************************************

\begin{document}

\title{Charges in light cones and quenched infrared radiation}

\author{Detlev Buchholz${}^a$, Fabio Ciolli${}^b$, Giuseppe Ruzzi${}^c$ and   
Ezio Vasselli${}^c$ \\[10pt]
\small  
${}^a$ Mathematisches Institut, Universit\"at G\"ottingen, \\
\small Bunsenstr.\ 3-5, 37073 G\"ottingen, Germany\\[5pt]
\small
${}^b$ Dipartimento di Matematica e Informatica, Universit{\`a}
della Calabria \\
\small Via Pietro Bucci - Cubo 30B, 87036 Rende (CS), Italy \\[5pt] 
\small
${}^c$ 
Dipartimento di Matematica, Universit{\`a} di Roma Tor Vergata \\
\small Via della Ricerca Scientifica 1, 00133 Roma, Italy \\
}
\date{}

\maketitle

\vspace*{-5mm}
\begin{abstract} \noindent
  The creation of electrically charged states and the resulting
  electromagnetic fields are considered in space-time regions in
  which such experiments can actually be carried out, namely in
  future-directed light cones. Under the simplifying assumption of
  external charges, charged states are formed from neutral pairs of
  opposite charges, with one charge being shifted to light-like infinity.
  It thereby escapes observation.
  Despite the fact that this charge moves asymptotically
  at the speed of light, the resulting electromagnetic field has a
  well-defined energy operator that is bounded from below. Moreover, 
  due to the spatiotemporal restrictions, the transverse
  electromagnetic field (the radiation) has no infrared
  singularities in the light cone. They are quenched and the observed 
  radiation can be described by states in the Fock space of photons.
  The longitudinal field between the charges (giving rise to
  Gauss's law) disappears for inertial observers in an instant.
  This is consistent with the fact that there is no evidence for
  the existence of longitudinal photons.
  The results show that the restrictions of operations and
  observations to light cones, which are dictated by the arrow of time,
  amount to a Lorentz-invariant infrared cutoff.
\end{abstract}

\section{Introduction}
\label{sec1}
\setcounter{equation}{0}

We consider in this article the construction
of electrically charged states in light cones  
from neutral pairs of opposite charges. This is accomplished 
by transporting one of the
charges to light-like infinity. It thereby 
disappears in the space-like complement
of any observer in the cone, leaving behind the
other charge, whence a charged state. Such a construction of charged states  
was proposed in~\cite{BuRo2014} in order to avoid  
the well known infrared problems
appearing in Minkowski space theories. It takes into account the fact 
that observations are only possible in future directed light cones,
they are impossible in the past. More precisely, they cannot be
made up for by returning to the~past. This basic 
fact implies that a resolution of the infrared
properties of radiation, predicted by the theory,
is beyond the limits of any experiment.

\medskip
To simplify the analysis, we restrict our attention to external charges
that act on the electromagnetic field but are not influenced by it. In
this framework, we determine the energetic properties of the field
generated by the transport of a charge in a given pair to light-like
infinity. Even though this charge moves asymptotically at the speed of light,
the resulting field induces representations of the observables in which the
spacetime translations are unitarily implemented and fulfill the relativistic
spectrum condition. Moreover, the subalgebras of the observables in light
cones are normal with respect to the vacuum (Fock) representation, \ie
the clouds of photons created by this procedure
do not appear as superselected.
These observations complement and enhance related 
results of Cadamuro and Dybalski \cite{CaDy}. Furthermore, 
the longitudinal photons that form the gauge bridges between
charges disappear in an instant and are virtual in this sense.
Our results confirm the ideas on which the
discussion in \cite{BuRo2014} is based. They show that the restriction of
operations and observations to light cones acts like a Lorentz-invariant
infrared cutoff. 

\medskip 
We adopt in the present analysis the framework 
in~\cite{BuCiRuVa2021,BuCiRuVa2022}, where infrared singular
charged states
were constructed in Minkowski space. These 
were obtained by shifting in neutral pairs of opposite charges
one of the charges to space-like infinity. 
This approach is based on a
concrete version of the universal C*-algebra of the electromagnetic
field, presented in~\cite{BuCiRuVa2016}.
The external charges are described by 
outer automorphisms, which are implemented by unitary operators
in an extended Gupta-Bleuler framework.
For the validity of Gauss's law, the
gauge fields in that setting are decisive.
They provide information about the value of electrical charges
based on flux measurements of the electromagnetic field. 
Whereas the impact of the charges on the
electromagnetic field can be studied in
that framework, it does not allow to describe
the effects of changes of their location. This location
is encoded in gauge bridges between the
charges formed by the gauge fields.
The algebra of the electromagnetic field was therefore 
extended by them in~\cite{BuCiRuVa2022}; 
they are complementary to the local charge operators.    
This extended framework provided in addition to
the radiation also information about the energy contained in 
the gauge bridges. 

\medskip 
In the present article, we consider both approaches, 
but restrict the algebras to light cones.
We will show that this restriction cures
the infrared problems in the Minkowskian world. 
The radiation produced by 
the creation and transport of electric charges in light cones  
can be described by  photons in Fock space.
Longitudinal photons entering the gauge bridges
disappear rapidly and have little effect on inertial observers. 
So the notorious infrared problems 
in quantum electrodynamics have their origin in the assumption 
that observations take place in entire Minkowski space. 
Under the present assumptions about the
spacetime location of real experiments these problems disappear. 

\medskip
Our article is organized as follows.
We briefly summarize in Sect.~\ref{sec2} our algebraic framework and
the description of charge-carrying automorphisms, expounded in
\cite{BuCiRuVa2021,BuCiRuVa2022}. Section 3 contains 
the discussion of the radiation field in light cones
which arises by shifting charges to light-like infinity. 
We explain our arguments and present our results there. 
The analysis of the gauge bridges, appearing in the extended algebra, 
and the presentation of the
corresponding results is the subject of Sect.\ 4. Most 
proofs of our statements are given in appendices. 
The article ends with some brief comments on the
physical implications.

\section{Fields and charge creating automorphisms}
\label{sec2}
\setcounter{equation}{0}

We use a slight generalization of the
Gupta-Bleuler description of local vector potentials underlying
the electromagnetic field, cf.\ \cite{BuCiRuVa2021,BuCiRuVa2022}. It includes,
both, the observable electromagnetic field and the gauge fields
that mediate the presence of charges (gauge bridges). 
Let $\cD_1(\RR^4)$ be the space of real, vector-valued test functions
with compact support. The exponentials of the Gupta-Bleuler fields
are denoted by the symbols 
\be \label{e.2.1} 
e^{icA(f)} \, , \quad  c \in \RR \, , \, f \in \cD_1(\RR^4) \, .
\ee 
These symbols generate a group with an
involutive anti-automorphism $*$ (hermitian conjugation). 
It is fixed by the Weyl relations, $f,g \in \cD_1(\RR^4)$,  
\be \label{e.2.2} 
e^{iA(g)} e^{i cA(f)} = e^{i \, (c/2) \langle g, D f \rangle} \, e^{iA(g + c f)} \, , 
\quad \big( e^{iA(f)} \big) {}^* =  e^{- iA(f)} \, ,
\quad  e^{iA(0)} = 1 \, .
\ee
Here 
\be \label{e.2.3} 
\langle g, D f \rangle \coloneqq \int \! dx dy \, g^\mu(x) \, 
D(x-y) \, f_\mu(y) \, ,
\ee
where $D$ denotes the massless Pauli-Jordan commutator
function. Note that we do not assume that the vector potential
is a solution of the wave equation. But according to
the defining equations \eqref{e.2.2}, the exponentials
$e^{i A(\square f)}$ are elements of the center of the
group, where $\square$ denotes the d'Alembertian. 

\medskip
This group can be extended by a standard procedure to a
C*-algebra $\efW$, where the exponential functions of the vector
potential are represented by unitary operators,
cf.~\cite{BuCiRuVa2016}. The algebra $\efW$ is the inductive limit
of the family (net) of local subalgebras $\efW(\mO)$ which are
assigned to the open, bounded regions $\mO \subset \RR^4$.
Each $\efW(\mO)$ is generated by unitaries~$e^{iA(f)}$ 
with test functions $f$ having support in $\mO$.
On this net acts the proper orthochronous Poincar\'e group
$\mL_+ \! \! \rtimes \RR^4$ by automorphisms which induce
on the test functions $f$ entering in the generating unitaries the maps
\be
((\Lambda, x)f)(y) \coloneqq (\Lambda^{-1}f)(\Lambda^{-1}y-x) \, ,
\quad \Lambda \in \mL_+ \, , \,  x,y \in \RR^4 \, . 
\ee
One thereby arrives at a local, covariant
net of C*-algebras on Minkowski space.

\medskip
The electromagnetic field $F$ is represented by 
a subalgebra $\efV \subset \efW$, which is 
an example of the universal algebra 
considered in \cite{BuCiRuVa2016}. 
Since $F = d A$ (curl of $A$), it follows
from Poincar\'e's Lemma that 
$\efV$  is generated by the unitaries 
$e^{iA(h)}$, where $h$ are elements of the subspace
$\mC_1(\RR^4) \subset \mD_1(\RR^4)$ of functions
with vanishing divergence, $\delta h = 0$.
The field $F$ satisfies by construction the
homogeneous Maxwell equations and 
the divergence of its Hodge dual, given by $\delta d A$,   
enters the inhomogeneous Maxwell equation. There it 
is equated with the electric current, $\delta d A = J$,
which is sensitive to the charge distributions. 
It follows from equation~\eqref{e.2.2}
that the unitaries $e^{iJ(f)}$, $f \in \mD_1(\RR^4)$, are elements
of the center of~$\efV$. This fits with the present 
discussion of external charges, which are
not affected by the radiation field.
The subgroup generated by the radiation field
can thus be identified with the unitaries
$e^{iA(h)}$, $h \in \cC_1(\RR^4)$, modulo the center
originating from the current.   
  
\medskip
As is explained in \cite{BuCiRuVa2021, BuCiRuVa2022},
external charge distributions,
fixed by some signed measure $\delta m$, are described by outer automorphisms
$\beta_m$ of the observable algebra~$\efV$. They act on the
generating unitaries according to
\be \label{e.2.5}
\beta_m(e^{iA(h)}) \coloneqq e^{i \varphi_m(h)} e^{iA(h)} \, ,
  \quad h \in \mC_1(\RR^4) \, ,
\ee
where $\varphi_m$ is a real linear functional. 
We consider here functionals on $\cD_1(\RR^4)$
of the form 
\be \label{e.2.6} 
\varphi_m(f) \coloneqq - \int \! dx dy \, m^\mu(x) D(x-y) f_\mu(y) \, ,
\quad m \in \mD_1(\RR^4) \, .
\ee
The automorphisms $\beta_m$
are then given by the adjoint action of unitary operators
in $\efW$,
\be  \label{e.2.7} 
\beta_m(e^{iA(h)}) = e^{iA(m)} e^{iA(h)} e^{-iA(m)} \, .
\ee
In \cite{BuCiRuVa2022} the unitaries  $e^{iA(m)}$ were multiplied with
unitary operators, carrying an external charge,  so that the resulting operators
become gauge invariant. Since these charged operators do not affect
(commute with) the observables in~$\efV$, we can ignore them here. 
Evaluating the action of the automorphisms on the currents one obtains
\be  \label{e.2.8} 
\beta_m(e^{iJ(f)}) = e^{i \varphi_m(\delta d f)} e^{iJ(f)} \, , \quad
f \in \mD_1(\RR^4) \, .
\ee
It follows from equation \eqref{e.2.6} that this action differs 
from the identity  mapping if and only if $\delta m \neq 0$.
Since the spacetime integrals of the test functions
$\delta m$ vanish, the total charge transferred 
by these automorphisms is zero, however. 

\medskip 
In order to arrive at automorphisms that
transfer a non-trivial charge one has to consider 
bi-localized charge distributions, where 
the charge has some given non-zero value in a fixed region 
and the opposite value in a space-like separated region.
The location of charges
can precisely be determined by the electromagnetic field $F$,
making use of Gauss' law. Automorphisms that transfer a
total charge different from zero are obtained  
by shifting the compensating charge out of reach for possible 
observers. In~\mbox{\cite{BuCiRuVa2021, BuCiRuVa2022}} they were shifted
to space-like infinity. In the present article, 
where we consider observers in light cones, they
are moved to light-like infinity. The resulting
automorphisms are then no longer represented by 
the adjoint action of unitary operators in~$\efW$ and 
one must determine their limits. This is accomplished
by a study of the limits of the underlying
functionals $\varphi_m$. 

\medskip
Turning to the expectation functionals on the
algebra of observables $\efV$, we proceed from the
vacuum state $\omega_0$ that is fixed by 
\be \label{e.2.9} 
\omega_0(e^{iA(h)}) \coloneqq e^{(1/2) \langle h, D_+ h \rangle} \, ,
\quad h \in \mC_1(\RR^4) \, .
\ee
The exponent is defined by equation 
\eqref{e.2.3}, where the Pauli-Jordan commutator function~$D$ 
is replaced by its positive frequency part $D_+$.
By the GNS-construction one obtains from  
the vacuum state the familiar representation of the
electromagnetic field on Fock space, 
where the current vanishes, the Poincar\'e transformations are unitarily
implemented, and the energy is non-negative 
for all inertial observers. 

\medskip
States $\omega_m$ with a non-trivial charge distribution
$\delta m$ arise by composing the vacuum state with
the corresponding automorphisms,
$\omega_m \coloneqq \omega_0  \hspace*{1pt} \beta_m$.
They are pure states on $\efV$, such as the vacuum~$\omega_0$.
Since the current vanishes in the vacuum representation,
the representations induced by $\omega_m$ are disjoint
from the vacuum representation if $\delta m \neq 0$.
It is also clear that the spacetime translations cannot
be unitarily implemented in the resulting representations, 
because the translations act non-trivially on the current and 
change the expectation values of $\delta m$, whence
the representation. Nevertheless, 
one can determine the energetic effects of the
charge distributions on the electromagnetic field. There are
two issues, discussed in \cite{BuCiRuVa2022},
which we also address here:

\medskip \noindent
(I) \ Focusing on the energy of the radiation
field, one has to analyze the action of the automorphisms $\beta_m$
on the energy density of the electromagnetic
field in the vacuum representation. 
If \mbox{$m \in \mD_1(\RR^4)$}, the generators $H_m$,
obtained by integrating the resulting
densities over a Cauchy surface, are 
selfadjoint operators in the vacuum Fock space. In the corresponding rest
frame, they are given by the action of 
$\beta_{\bm^\perp} \coloneqq \mbox{Ad} \, e^{i\bA(\bm^\perp)}$ on 
the vacuum Hamiltonian~$H_0$, where
$\bm^\perp \coloneqq (0,\bm - \nabla \Delta^{-1} (\nabla \bm))
\in \cC_1(\RR^4)$ is the
transverse part of the spatial components of $m$,
cf.\ \cite[Sect.\ 4]{BuCiRuVa2022}.
With this input, the transition to charged automorphisms and their
impact on the radiation field can then be studied. To do this, the
properties of the automorphisms~$\beta_{\bm^\perp}$ must be controlled in the
limit in which the compensating charges are moved away.
The results enter in the analysis of the energetic and 
normality properties of the limit of the states
$\omega_{\bm^\perp} = \omega_0 \, \beta_{\bm^\perp}$.
As an aside, the Lorentz
transformations are spontaneously broken in these states
in Minkowski space, so that the specification of a rest system
is indicated. 

\medskip \noindent
(II) \ In order to obtain information about the energetic
properties of the gauge bridges 
that mediate Gauss's law, the algebra $\efV$ must be extended,
cf.\ the discussion in \cite[Sect.\ 3]{BuCiRuVa2022}.
One chooses again a rest frame and considers the
subalgebra  $\pmb{\efV} \subset \efW$
which is generated by the three spatial components $\bA$ of
the vector potential (temporal gauge \cite{St}). This subalgebra
is also generated by a net of local algebras on which 
the spacetime translations act covariantly,
but not the Lorentz transformations. The vacuum state is given by 
\be
\omega_0(e^{i \bA(\bg)}) \coloneqq e^{(1/2) \langle \bg, D_+ \bg \rangle} \, ,
\quad \bg \in \pmb{\cD}_1(\RR^4) \, ,
\ee
where $\pmb{\cD}_1(\RR^4) \subset \cD_1(\RR^4)$ is the subspace
of test functions with vanishing time component. 
One can recover in the corresponding representation the
observable algebra $\efV$ by identifying the electric field
with the time derivative $\dot{\bA}$ and the magnetic
field with the curl $\nabla \times \bA$. 
Local charge distributions, determined by $m \in \cD_1(\RR^4)$,  
are created by automorphisms $\beta_{\bn - \bm}$ of $\, \pmb{\efV}$,  
where $\bn \coloneqq \nabla \Delta^{-1}  \partial_0  m_0$.
These automorphisms are unitarily implemented in the vacuum
representation of $\pmb{\efV}$,
\be 
\beta_{\bn - \bm}(e^{i \bA(\bg)}) =
e^{ i \varphi_{\bn - \bm}(\bg)} \, e^{i \bA(\bg)}
= e^{i\bA(\bn - \bm)} e^{i \bA(\bg)} e^{-i\bA(\bn - \bm)} \, , \quad
\bg \in \pmb{\cD}_1(\RR^4) \, .
\ee
Just as in approach (I), 
one considers the automorphisms $\beta_{\bn - \bm}$ 
in the limit in which the compensating charge is
moved away. Their analysis provides the
information about the energetic and normality
properties of the (charged)
limits of the states $\omega_0  \beta_{\bn - \bm}$.

\section{Charges and radiation fields in light cones}
\label{sec3}
\setcounter{equation}{0}

We restrict now the observables to a forward light cone,
where we choose without loss of generality the open cone  
$V_+ \coloneqq \{ x \in \RR^4 : x_0 > | \bx | \}$
with apex at the origin $0$. The corresponding subalgebra
of observables is denoted by $\efV(V_+)$. For the
construction of bi-localized, overall neutral   charge distributions $m$
with support in $V_+$, we consider the bijective map
$\xi : \RR_+ \times \RR^3 \rightarrow V_+$, given~by 
\be \label{e.3.1}
  \xi(\tau, \bx) \coloneqq (\sqrt{\tau^2 + \bx^2}, \bx) \, , \quad 
  \tau > 0 \, , \ \bx \in \RR^3 \, . 
\ee 
It describes hyperboloids, called time shells,
which are labeled by the proper time $\tau$ of 
inertial observers who started from the origin. 
This map will be used to map  
time slices $0 < \tau_1 \leq \tau \leq \tau_2$ into  
regions between time shells in~$V_+$. Putting $x^2 \coloneqq x_0^2 - \bx^2 $,
the inverse map of $\xi$ is given by 
\be \label{e.3.2}
  \xi^{-1}(x) = \big( \sqrt{x^2}, \bx \big) \, , \quad
  x = (x_0, \bx) \in V_+ \, .
\ee
Singular functions (distributions) $m$ that enter in the vector potential  
between opposite charges, localized at points $x_1, x_2 \in V_+$ 
on a time shell $\tau$, are defined as follows. 
Connecting the pre-images $\xi^{-1}(x_1), \xi^{-1}(x_2)$ by a straight line,
their image under the action of $\xi$
defines a path from $x_1$ to $x_2$ on the
time shell. It is given by,
$0 \leq u \leq 1$, 
\be \label{e.3.3}
  u \mapsto w_{x_1, x_2}(u) \coloneqq
  \big(\sqrt{\tau^2 +
    (\bx_1 + u (\bx_2 - \bx_1))^2},
  \bx_1 + u (\bx_2 - \bx_1) \big) \, , 
\ee 
where $(\tau, \bx_i) = \xi^{-1}(x_i)$, $i=1,2$. The
components $m_{x_1, x_2}^\mu$, $\mu = 0, \dots , 3$, of 
$m_{x_1, x_2}$, entering in the vector potential 
between the charges are given by
\be \label{e.3.4}
x \mapsto  m^\mu_{x_1, x_2}(x) \coloneqq \int^1_0 \! du \, 
  \big( \tfrac{d}{du} w^\mu_{x_1, x_2}(u) \big) \, 
\delta(x-w_{x_1, x_2}(u)) \, .
\ee
By construction, they satisfy 
\be \label{e.3.5} 
\partial_\mu m^\mu_{x_1, x_2}(x)= - \int^1_0  \! du \, \tfrac{d}{du} 
\delta(x-w_{x_1, x_2}(u))= \delta(x-x_1) - \delta(x-x_2) \, .  
\ee
We proceed from the singular functions  $m_{x_1, x_2}$ to elements of
$\cD_1(\RR^4)$ by integration with suitable test functions, depending
on $x_1, x_2$. With reference to their pre-images, we
choose test functions of the specific form 
\be \label{e.3.6}
(\tau, \bx_1, \bx_2) \mapsto
\vartheta(\tau, \bx_1) \sigma(\bx_2 - \bx_1) \, ,
\ee
where $\vartheta$ has compact support in  
$\{ (\tau, \bx) : \tau >  0 \, , \, \bx \in \RR^3 \}$
and $\sigma$ has compact support in~$\RR^3$. We also assume that
for given $q \in \RR$
\be \label{e.3.7} 
\int \! d \tau d \bx \, \vartheta(\tau, \bx) = q \, , \quad
      \int \! d \bx \, \sigma(\bx) = 1 \, .
\ee
The resulting regularized functions (if their dependence
on the chosen test functions is omitted for the sake of simplicity) are 
\be \label{e.3.8}
x  \mapsto m^\mu_1(x) \coloneqq 
\int_0^1 \! du \! \int \! d\tau d\bx_1 d\bx_2 \, 
\vartheta(\tau,\bx_1) \sigma(\bx_2 - \bx_1) \, 
\big( \tfrac{d}{du} w^\mu_{x_1, x_2}(u) \big) \, 
\delta(x-w_{x_1, x_2}(u)) \, . 
\ee
Performing the integration with regard to
$\tau, \bx_1$ and making an obvious change
of variables,
this integral can be presented in the form
\be \label{e.3.9}
  x  \mapsto m^\mu_1(x) = 
\theta(x_0) \int_0^1 \! du \! \int \! d\by \, 
(1/\sqrt{x^2}) \, \vartheta\big(\sqrt{x^2},\bx - u \by \big) \,
\sigma\big(\by \big) \, 
\big((\bx \by) , x_0 \by \big)^\mu \, .
\ee
In view of the support properties of $\vartheta$ and $\sigma$, it is
apparent that $m^\mu_1$ has compact support in $V_+$. 
Since $x \mapsto \sqrt{x^2}$
is positive on $V_+$ and smooth, it follows that
$m_1 \in \cD_1(\RR^4)$. Moreover, 
relations \eqref{e.3.5} and \eqref{e.3.8} imply that
\be \label{e.3.10} 
x \mapsto \delta m_1(x) = \partial_\mu m^\mu_1(x) = \theta(x_0) 
(x_0 / \sqrt{x^2}) \, \big(\vartheta(\sqrt{x^2}, \bx) -
(\vartheta \star \sigma)(\sqrt{x^2}, \bx) \big) \, , 
\ee 
where $\vartheta \star \sigma$ denotes convolution of
the functions with regard to the spatial variables.

\medskip
In order to keep our discussion simple, we restrict
our attention to functions $\vartheta$ of the product form  
\be \label{e.3.11}
\tau, \bx \mapsto \vartheta(\tau, \bx)
\coloneqq \vartheta_0(\tau) \vartheta_1(\bx) \, ,
\ee
where $\vartheta_0$ has support in
$\{ \tau \in \RR :  0 < \tau_1 \leq \tau \leq \tau_2 \}$ and $\vartheta_1$
in $\{ \bx \in \RR^3 : |\bx| \leq r \}$.
For $\sigma$ we choose functions which have support in 
$\{ \bx \in \RR^3 : 0 < R \leq |\bx| \leq \bar{R}  \}$.
It then follows after a moments reflection that the supports of the
two charge densities on the right hand side of equation \eqref{e.3.10} 
are space-like separated if $R > (\tau_2^2 - \tau_1^2)/2\tau_1 +
(1 + \tau_2^2/\tau_1^2) \, r$. We will stick in the following
to functions with this product form and these support properties.

\medskip
The field created by pairs
of opposite charges in $V_+$ perturbs the electromagnetic field.
As was explained at the end of the preceding Sect.\ \ref{sec2} in
(I), its
energetic effects are described by a Hamiltonian
$H_m$ which is obtained
by the adjoint action of the unitaries $e^{i \bA(\bm^\perp)}$
on the vacuum Hamiltonian $H_0$.
The function $\bm^\perp$ is in general no longer
a test function because of the transverse projection.
But it is still an element of the single
particle space, so the unitary operators are well defined on the
vacuum Fock space. In other words, the representation of
the radiation field induced by the bi-localized charges is unitarily
equivalent to the vacuum representation in Minkowski space.
This changes, however, if the compensating charge is
shifted away.

\medskip 
In order to approximate charged states in $V_+$  with a fixed
charge distribution $\vartheta$ and to keep control
of the energy of the radiation field, we  
move the compensating charge to light-like infinity
by scaling $\sigma$,
\be \label{e.3.12}
\bx \mapsto \sigma_s(\bx) \coloneqq (1/s^3) \,
\sigma(\bx/s) \, ,
\quad s \geq 1 \, .
\ee
These data, plugged into equation \eqref{e.3.8}, yield 
test functions in $\cD_1(\RR^4)$ which are denot\-ed~by~$m_s$
(they are obtained by extension of the $u$-integral
from $1$ to $s$). The  
space-like distance of the supports of the corresponding 
charge densities in equation~\eqref{e.3.10} tends to infinity
in the limit of large $s$, while the  support of $m_s$ remains
in a hypercone between the time shells
$\tau_1 \leq \tau \leq \tau_2$, cf.\ \cite[Appx.]{BuRo2014}.
For the analysis of the
properties of the radiation field in this limit, we 
consider the four-dimensional Fourier transform
of the transverse component $\bm^\perp_s$ of $m_s$.
Using time-shell coordinates,
it is given by 
\begin{align} \label{e.3.13}
  p  \, & \mapsto \, \widetilde{\bm}_s^\perp(p) =
  ( \widetilde{\bm}_s - \bn (\bn  \widetilde{\bm}_s))(p) \nonumber \\
  & = \int \! d\tau d\bx \, e^{i(p_0 \sqrt{\tau^2 + \bx^2} - \bp \bx)} \, 
  \vartheta_0(\tau) \!
  \int_0^s \! \! du \! \! \int \! d\by \, 
\vartheta_1(\bx - u \by) \, \sigma(\by) (\by - \bn (\bn \by))
\, , 
\end{align}
where $\bn \coloneqq \bp / | \bp |$. It is shown in Appendix \ref{sectA}
that the restrictions of the functionals \mbox{$s \mapsto \varphi_{m_s}$} 
to test functions with support in $V_+$, 
cf.~\eqref{e.2.6}, converge for large~$s$.
Their limits~$\varphi_{m_\infty}$ involve the
distributions $m_\infty$ which are obtained by 
extending the $u$-integral in~\eqref{e.3.9} to infinity. 
Their transverse parts $\bm_\infty^\perp$
induce representations of the radiation field which are
charged, hence disjoint from
the vacuum representation in Minkowski space. 
Still, the spacetime translations are implemented 
in these representations by unitary operators which  
satisfy the relativistic spectrum condition. 
For the proof we make use of
a criterion by Roepstorff \cite{Ro}. The crucial step is to 
demonstrate that the limit distributions~$\widetilde{\bm}^\perp_\infty$
can be restricted to positive light-like momenta 
and form elements of the space~$L_1^ \perp(\RR^4)$ with (semi)norm given by
\be \label{e.3.14}
\| \bm^\perp \|^2_1 \coloneqq \int \! \frac{d \bp}{\sqrt{1 + |\bp|^2}} \,
|\widetilde{\bm}^\perp(p)|^2 \,  \Big|_{p_0 = | \bp |} \, .  
\ee
This space constitutes an extension of the massless single particle space
$L_0^\perp(\RR^4)$, where $1$ is replaced by $0$ in the square root in
equation \eqref{e.3.14}. 
It is shown in Appendix \ref{sectA} that 
\be \label{e.3.15} 
| \widetilde{\bm}^\perp_\infty(p)  | \, \Big|_{p_0 = |\bp|}
\leq c_k | \bp |^{-1 -k} \, , \quad 0 < k < 1/4 \, .
\ee
The constants $c_k$ depend, apart from the choice of $k$ within
the above limitations, on the underlying test functions
$\vartheta, \sigma$. This estimate implies that
$\bm^\perp_\infty \in L_1^\perp(\RR^4)$ and thus leads to the following
statement.
\begin{proposition}  \label{p.3.1}
Let $\bm^\perp_\infty$ be the radiation field obtained by 
generating pairs of opposite charges
and transporting one of them to light-like infinity.
The state $\omega_{\bm^\perp_\infty} \coloneqq \omega_0 \, \beta_{\bm^\perp_\infty}$ 
induces a representation of $\efV$ in which the spacetime translations are
unitarily implemented. The corresponding  generators satisfy
the relativistic spectrum condition (positivity of the energy
for inertial observers).
\end{proposition}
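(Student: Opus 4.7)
The plan is to invoke the criterion of Roepstorff~\cite{Ro} for coherent state perturbations of the Fock vacuum on the CCR algebra of the radiation field. That criterion asserts that, writing $\omega_{\bm^\perp_\infty} = \omega_0 \circ \beta_{\bm^\perp_\infty}$ as a coherent modification of $\omega_0$, the spacetime translations are unitarily implemented on the GNS Hilbert space with a joint generator whose spectrum lies in the closed forward light cone provided the classical field $\widetilde{\bm}^\perp_\infty$ admits a restriction to the positive mass-shell $\{ p_0 = |\bp| \}$ that defines an element of $L_1^\perp(\RR^4)$, in the sense that the norm~\eqref{e.3.14} is finite. Thus the entire proposition reduces to verifying these two properties of $\bm^\perp_\infty$.

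First I would argue that the restriction is well defined. The finite-scale Fourier transform in~\eqref{e.3.13} is manifestly a smooth function of $p$, and the limit $s \to \infty$, carried out in Appendix~\ref{sectA}, preserves this smoothness in $\bp \neq 0$, so that $\widetilde{\bm}^\perp_\infty(p)|_{p_0 = |\bp|}$ is defined pointwise away from the origin. Next I would use the pointwise bound~\eqref{e.3.15} to estimate the norm. Fixing $k \in (0, 1/4)$ and splitting the integral in~\eqref{e.3.14} into the regions $|\bp| \leq 1$ and $|\bp| > 1$, in spherical coordinates the infrared contribution is bounded by a constant times $\int_0^1 |\bp|^{-2k}\, d|\bp|$, which converges since $k < 1/2$, and the ultraviolet contribution by a constant times $\int_1^\infty |\bp|^{-1-2k}\, d|\bp|$, which converges since $k > 0$. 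Hence $\bm^\perp_\infty \in L_1^\perp(\RR^4)$, and Roepstorff's criterion yields both the unitary implementation of the translations and the positivity of the energy for every inertial observer.

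The main substantive obstacle is not this organizational step but the pointwise bound~\eqref{e.3.15} itself, whose proof is deferred to Appendix~\ref{sectA}. The difficulty there is to control the oscillatory integral~\eqref{e.3.13} uniformly in the scale parameter and in the limit $s \to \infty$, along the degenerate light-like direction in which the compensating charge is transported to infinity. The cancellations enforced by the transverse projection $\by - \bn(\bn \by)$, together with the smoothness of $\vartheta_0$ on the compact positive proper-time interval $[\tau_1, \tau_2]$, are what generate mass-shell decay faster than $|\bp|^{-1}$ and therefore place $\bm^\perp_\infty$ in the enlarged single-particle space $L_1^\perp$ rather than in a merely distributional dual.
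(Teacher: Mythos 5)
Your organizational reduction is exactly the paper's: membership of $\bm^\perp_\infty$ in $L_1^\perp(\RR^4)$ plus Roepstorff's criterion gives both the unitary implementation of the translations and the relativistic spectrum condition, and your verification that the bound \eqref{e.3.15} implies finiteness of the norm \eqref{e.3.14} is correct (in spherical coordinates the integrand is $O(|\bp|^{-2k})$ near the origin and $O(|\bp|^{-1-2k})$ at infinity, both integrable for $0<k<1/4$).

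The genuine gap is that the two analytic inputs on which everything rests are asserted, not proven, and they constitute essentially all of the work. (i) The existence of the limit: $m_\infty$ is not a test function, and the convergence of $s\mapsto\varphi_{m_s}$ on $\pmb{\cD}_1(V_+)$ has to be established; the paper does this by a causality argument --- $\partial_s m_s$ is supported in the region \eqref{e.a.2}, which for large $s$ lies spacelike to any compact $\mO\subset V_+$, so that $\varphi_{\bm_s}(\bfy)$ is eventually constant in $s$. Your remark that the limit ``preserves smoothness in $\bp\neq 0$'' does not replace this. (ii) The decay estimate \eqref{e.3.15} itself. Its proof requires first the pointwise bound \eqref{e.a.9} on $\brho_{\bn}$, in which both the factor $\sqrt{1-(\bex_y\bn)^2}$ coming from the transverse projection and the $|\bx|^{-2}$ falloff produced by the $u$-integration must be extracted, and then a stationary-phase-type argument based on the Cauchy--Euler derivative $D=\tau\partial_\tau+(\bx\bnabla_{\bx})$, which uses $D\,\zeta(\eta_{\bn})=\eta_{\bn}\,\zeta'(\eta_{\bn})$ together with the strict lower bound $\eta_{\bn}(\tau,\bx)=\sqrt{\tau^2+\bx^2}-\bn\bx\geq\tau_1^2/(\sqrt{\tau^2+\bx^2}+|\bx|)$ on the support of $\vartheta_0$ to convert the oscillation into the decay $|\bp|^{l-2}$, $3/4<l<1$. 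Your final paragraph correctly names these mechanisms as heuristics, but without carrying them out the proposition is not proved: nothing short of \eqref{e.3.15} guarantees that the mass-shell restriction of $\widetilde{\bm}^\perp_\infty$ exists and lies in $L_1^\perp(\RR^4)$ rather than merely in a distributional dual.
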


\vspace*{0mm}
Let us briefly discuss at this point what information
can be obtained in $V_+$ about the energy content
of radiation fields which are created by perturbing  
the vacuum. Note that the Minkowskian
Hamiltonian $H_0$ is not affiliated with the
(weak closure of the) algebra 
$\efV(V_+)$. Nevertheless, inertial observers
can obtain some information 
by using  the local stress energy tensor~$\theta_{\mu \nu}$,
which is a distribution-valued observable. Its integral over 
a Cauchy surface $\Sigma \subset V_+$, \eg a time shell,
is formally given by
$\int_\Sigma  d \, \Sigma^\mu(x) \, \theta_{\mu \, 0}(x)$.
Denoting by $\Omega$ the vacuum vector, one obtains for
a dense set of local observables $A \in \efV(V_+)$ 
\be \label{e.3.16} 
\langle \Omega,
[\int_\Sigma  d \, \Sigma^\mu(x) \, \theta_{\mu \, 0}(x) , A ] \Omega \rangle
=  \langle \Omega, [H_0 , A ] \Omega \rangle = 0 \, .
\ee
This equation is valid since the integral
extends only over a bounded region
because of the space-like commutativity of observables. So it
coincides with the integral over a locally deformed space-like
hyper-plane in $\RR^4$ and hence with $H_0$.
By standard arguments this equation implies that one arrives at 
a densely defined, symmetric operator $H_\Sigma$, putting 
\be
H_\Sigma A \Omega \coloneqq 
[\int_\Sigma  d \, \Sigma^\mu(x) \, \theta_{\mu \, 0}(x) , A ] \, \Omega \, .
\ee
It coincides with the restriction of the vacuum
Hamiltonian $H_0$ to that domain, which in fact 
is a core. Therefore, $H_0$ can be recovered in $V_+$
by subtracting from 
$\int_\Sigma  d \, \Sigma^\mu(x) \, \theta_{\mu \, 0}(x)$
the contributions resulting from vacuum polarization effects. 
The integral by itself does not lead to a meaningful operator, however, 
due to infinite fluctuations. 

\medskip
Having seen that the creation of charged states 
produces radiation fields with decent energetic properties, we will
show next that the states of these fields on light cone algebras
are normal relative to the vacuum state. This means that
the obstructions to normality considered possible in
\cite[Thm.\ 3.1]{CaDy} do not occur. Since the representations
induced by the radiation fields are covariant with regard to
spacetime translations, it suffices to establish
normality on a single light cone algebra.
Given $m_\infty$, we choose a light cone as follows.
The underlying function $m_1$ in \eqref{e.3.8} is a test function with  
compact support. So there exists some positive  
time translation $t$ such that
this support is contained in the backward light cone
$V_- + t$. We prove normality of the radiation field
on the corresponding forward light cone $V_+ + t$. This will 
be accomplished by making use of Huygens's principle.
It implies that the restriction of
the functional $\varphi_{m_\infty}$ in equation \eqref{e.2.6} to test 
functions with support in $V_+ + t$
does not change if one modifies 
$m_{\infty}$ in the past cone $V_- + t$. 
Taking advantage of this indeterminacy,  we will see that 
the radiation data, which are accessible
in $V_+ + t$, can be described by vectors in the
Fock space of photons.

\medskip 
Our argument is based on the 
fact that charges with constant velocity do not radiate.
We therefore return from $m_\infty$ to the
underlying gauge bridges between 
point charges, cf.~\eqref{e.3.4}. The
underlying paths \eqref{e.3.3} in $V_+$,
restricted to the region $V_+ \backslash (V_- + t)$, 
are then extended into $V_+ \cup (V_- + t)$ in such a way that
each of them approaches a light ray $u \mapsto  u \, l $
for large negative as well as positive $u \in \RR$. 
Radiation is caused by deviations from these light-like 
paths at intermediate times. For
charges shifted asymptotically on a time shell in light-like
directions, this radiation
is moderate and will be determined. Integrating
the resulting functions~\eqref{e.3.4} for the
extended paths with the test functions
$\vartheta_0, \vartheta_1, \sigma$
underlying~$m_\infty$, cf.~\eqref{e.3.8}, one obtains
a more regular function $m_{reg}$ with support
in $V_+ \cup (V_- + t)$. It coincides with $m_\infty$
on $V_+\backslash (V_- + t)$, hence
\mbox{$(\varphi_{m_\infty} - \varphi_{m_{reg}}) \upharpoonright \cD_1(V_+ + t) = 0$}.
Moreover, as is shown in Appendix~\ref{sectB}, the transverse 
part of $m_{reg}$ is element of the one-photon space, 
\mbox{$\bm^\perp_{reg} \in L_0^\perp(\RR^4)$}. As both,
$\bm^\perp_{reg}$ and $\bm^\perp_\infty$, induce 
representations which are covariant with regard to
the spacetime translations, the following result~obtains. 
\begin{proposition} \label{p.3.2} 
  Let $\omega_{\bm^\perp_\infty}$ be a state as in Proposition \ref{p.3.1} and 
  let $\omega_{\bm^\perp_{reg}}$ be the associated state in Fock space,
  described in the preceding discussion. The restrictions of these 
  states to the light cone algebra $\efV(V_+ + t)$ coincide  
  for some $t \in V_+$. 
  Whence, the restrictions are normal with regard to each other
  on all algebras $\efV(V_+ + x)$, $x \in \RR^4$. Thus, infrared clouds
  of photons cannot be detected in light~cones. 
\end{proposition}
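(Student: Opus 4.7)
The plan is to prove the proposition in two stages, exploiting strong Huygens' principle in four spacetime dimensions (the massless Pauli--Jordan function $D$ is supported on the light cone $\{x^2=0\}\subset\RR^4$). First, for any $h\in\cD_1(\RR^4)$ with $\supp h\subseteq V_++t$, the difference of the defining functionals is
\[
\varphi_{m_\infty}(h)-\varphi_{m_{reg}}(h)
= -\!\int\! dx\,dy\,(m_\infty-m_{reg})^\mu(x)\,D(x-y)\,h_\mu(y).
\]
Since $m_\infty$ and $m_{reg}$ agree on $V_+\setminus(V_-+t)$ by construction, the outer integrand is supported in $(V_-+t)\times(V_++t)$; for any such pair $(x,y)$, the vector $y-x=(y-t)-(x-t)$ lies in the open forward cone (which is closed under addition), so $(y-x)^2>0$ strictly and $D(x-y)=0$. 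Via~\eqref{e.2.5} this forces $\beta_{m_\infty}$ and $\beta_{m_{reg}}$ to act identically on $\efV(V_++t)$, hence $\omega_{\bm^\perp_\infty}$ and $\omega_{\bm^\perp_{reg}}$ coincide on this algebra.

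Second, invoking the Appendix~\ref{sectB} input $\bm^\perp_{reg}\in L_0^\perp(\RR^4)$, the Weyl operator $e^{i\bA(\bm^\perp_{reg})}$ lives as a genuine unitary on the vacuum Fock space and implements $\beta_{\bm^\perp_{reg}}$ there, so that $\omega_{\bm^\perp_{reg}}=\omega_0\,\beta_{\bm^\perp_{reg}}$ is a vector state --- hence normal with respect to $\omega_0$ on \emph{every} subalgebra of $\efV$. Combined with the first stage, this yields mutual normality of $\omega_{\bm^\perp_\infty}$ and $\omega_{\bm^\perp_{reg}}$ on $\efV(V_++t)$. To extend to all $\efV(V_++x)$, $x\in\RR^4$, I would exploit two ingredients: (i) the apex $t$ in the construction of $m_{reg}$ is free subject to $\supp m_1\subseteq V_-+t$, so a tailored $m_{reg}^{(x)}$ directly handles every $x$ lying in the future of $\supp m_1$; (ii) the translations, unitarily implemented in both the vacuum and the charged representation by \propref{p.3.1} and satisfying $\alpha_a(\efV(V_++t))=\efV(V_++t+a)$, transport normality to the remaining cones.

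The main obstacle lies not in the steps above --- strong Huygens is classical, and the passage from coincidence to normality is routine once the one-photon condition is granted --- but in the Appendix~\ref{sectB} estimate $\bm^\perp_{reg}\in L_0^\perp(\RR^4)$. Establishing this requires quantifying the radiation released during the intermediate-time transition between the two asymptotic light-ray motions of the extended paths, and showing that the resulting transverse Fourier mode is square-integrable against the massless measure $d\bp/|\bp|$ on the positive mass shell. That is the step where the physical heuristic --- that uniformly and light-like moving charges do not radiate, so only the smooth intermediate transition contributes --- has to be turned into a rigorous bound.
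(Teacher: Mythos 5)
Your proposal follows essentially the same route as the paper: strong Huygens' principle (the support of $D$ on the light cone) gives coincidence of the two states on $\efV(V_++t)$, the one-photon property $\bm^\perp_{reg}\in L_0^\perp(\RR^4)$ makes $\omega_{\bm^\perp_{reg}}$ a normal (vector) state in Fock space, and translation covariance propagates normality to all cones $V_++x$. You also correctly locate the remaining technical substance in the Appendix~\ref{sectB} estimate $\bm^\perp_{reg}\in L_0^\perp(\RR^4)$, which you take as input exactly as the proposition's wording permits.
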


We conclude this section by showing that 
the radiation field can be clearly distinguished from the other components of
the electromagnetic field by measurements in~$V_+$.
For the proof, we proceed from the 
three-vector-valued test functions \mbox{$\bfy \in \, \pmb{\cD}_1(V_+)$}
having compact support in the (open) light-cone $V_+$. The derived
functions $h \coloneqq \big( 0, -\Delta \bfy + \nabla (\nabla \bfy) \big)$
are elements of $\cC_1(V_+)$ that lie in a 
subspace of wave functions in $L_0^\perp(\RR^4)$ fixed by the
chosen inertial system. In fact, they 
are dense in $L_0^\perp(\RR^4)$. To prove this, note that 
the functions $x \mapsto h_t(x) \coloneqq h(x_0 - t, \bx)$ are
elements of $\cC_1(V_+)$ for all~$t \geq 0$. So, by
a double Laplace-transformation, 
$\varepsilon \mapsto  h^\varepsilon \coloneqq 
\int_0^\infty \! ds \int_0^\infty \! dt \, e^{- \varepsilon (s + t)} h_{s + t}$, 
$\varepsilon > 0$, one obtains test functions with support in~$V_+$. Their 
Fourier transforms are 
Schwartz test functions (omitting constant factors) 
\be \label{e.3.18}
p \mapsto \widetilde{h}^\varepsilon(p) = 
(p_0 - i\varepsilon)^{-2} \, \widetilde{h}(p) =
(p_0 - i\varepsilon)^{-2}  \big( 0, \bp^2 \widetilde{\bfy}(p) -
\bp \, (\bp \widetilde{\bfy}(p)) \big)
\, , \quad
\varepsilon >  0 \, .
\ee
For lightlike momenta 
$p_0 = | \bp |$ they converge in the
one-photon space $L_0^\perp(\RR^4)$ 
to~$\bfy^\perp$ in the limit of small $\varepsilon$. 
That they are dense in $L_0^\perp(\RR^4)$ follows from the Reeh-Schlieder
theorem.
Moreover, in view of the decay properties
of~$p \mapsto \widetilde{\bfy}(p)$
for large momenta, it is also
clear that for $m_\infty^\perp \in L_1^\perp(\RR^4)$
the functions $h^\varepsilon \mapsto \varphi_{m_\infty}(h^\varepsilon)$ 
converge to $\varphi_{m_\infty}(\bfy^\perp)= \varphi_{m_\infty^\perp}(\bfy)$ on the
test functions~\eqref{e.3.18} in the limit
of small~$\varepsilon$.
It implies that the transverse part of the
electromagnetic field and the dispersion
law of photons can be determined with great
accuracy in bounded spatial regions
by extended temporal measurements. 

\section{Charges and gauge bridges in light cones}
\label{sec4}
\setcounter{equation}{0} 

We turn now to the analysis of the observable
features of the gauge bridges which connect the charges. 
As outlined in Section \ref{sec2}, we consider 
the extended algebra $\pmb{\efV} \subset \efW$, 
which describes in addition to the two transverse
degrees of freedom of the photons the longitudinal degree
of the gauge bridges. Consistent with the preceding
notation, we consider now 
three-vector-valued wave functions 
$\bl \in \pmb{\cD}_1(\RR^4)$.
The spaces spanned by them are denoted
by~$L_\iota(\RR^4) \supset L_\iota^\perp(\RR^4)$
with (semi)norms given by 
\be \label{e.4.1}
\| \bl \|^2_\iota \coloneqq \int \! \frac{d \bp}{\sqrt{\iota + |\bp|^2}} \,
|\widetilde{\bl}(p)|^2 \,  \Big|_{p_0 = | \bp |} \, ,
\quad \iota = 0,1 \, .
\ee
The exponentials $e^{i \bA(\bl)}$ for $\bl \in L_0(\RR^4)$ are
unitary operators in the vacuum representation whose adjoint
action results in the automorphisms  $\beta_{\bl}$. 
If $\bl \in L_1(\RR^4)$, the corresponding action $\beta_{\bl}$ defines 
outer automorphisms of the algebra ${} \, \pmb{\efV}$,  which in general are
not unitarily implemented in the vacuum representation. In those cases 
the states $\omega_0 \, \beta_{\bl}$ are disjoint from 
the states in the vacuum representation. But one still
has the following variant of the result by Roepstorff \cite{Ro}, 
mentioned in Sect.\ \ref{sec3}. Its proof is given in Appendix~\ref{sectC}. 
\begin{lemma} \label{l.4.1}
  Given $\bl \in L_1(\RR^4)$, let $\omega_0 \, \beta_{\bl}$ be the
  state on $\pmb{\efV}$ that describes the corresponding
  gauge bridge and radiation field. 
  The spacetime translations $\RR^4$ are unitarily
  implemented in the representation induced by this state with 
  generators that satisfy the relativistic spectrum condition. 
\end{lemma}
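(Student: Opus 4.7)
The plan is to invoke the same criterion of Roepstorff~\cite{Ro} that was already used for Proposition~\ref{p.3.1}, with the only change being that the transverse one-particle space $L_0^\perp(\RR^4)$ is replaced by its unconstrained three-vector counterpart $L_0(\RR^4)$. Inspection of the proof of Proposition~\ref{p.3.1} reveals that transversality was a property of the specific data $\bm^\perp_\infty$ at hand, not a structural ingredient of Roepstorff's argument: the criterion takes as input any one-particle function $\bl$ on the forward mass shell whose weighted $L^2$-norm with weight $1/\sqrt{1+|\bp|^2}$, i.e.\ the norm~\eqref{e.4.1} with $\iota=1$, is finite. Since the vacuum state on $\pmb{\efV}$ is defined by the analogous quasi-free formula for all $\bg\in\pmb{\cD}_1(\RR^4)$, the hypothesis $\bl\in L_1(\RR^4)$ supplies precisely what is required.

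Concretely, I would first establish that $\omega_0\,\beta_{\bl}$ and its composition with an arbitrary translation $\tau_x$ are folium-equivalent by showing $\bl - \bl_x \in L_0(\RR^4)$, where $\bl_x(y) \coloneqq \bl(y-x)$. On the mass shell,
\begin{equation}
\|\bl - \bl_x\|_0^2 \;=\; \int\!\frac{d\bp}{|\bp|}\, \bigl|1 - e^{-i(x_0|\bp| - \bp\cdot\bx)}\bigr|^2\,|\widetilde{\bl}(p)|^2\Big|_{p_0 = |\bp|}\,,
\end{equation}
and the quadratic vanishing of the oscillating factor as $|\bp|\to 0$ absorbs the infrared weight discrepancy between $L_0$ and $L_1$, while the ultraviolet tails of the two weights agree. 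Once folium equivalence is established, standard Weyl-algebra manipulations produce intertwiners $V(x)$ on the GNS Hilbert space of $\omega_0\,\beta_{\bl}$ that implement $\tau_x$; the Weyl relations~\eqref{e.2.2} show that $V(x)V(y)V(x+y)^{-1}$ is a jointly continuous scalar phase, hence trivializable on the simply connected group $\RR^4$, upgrading $V$ to a bona fide strongly continuous unitary representation.

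The main obstacle, as in Proposition~\ref{p.3.1}, will be the spectrum condition, and this is where Roepstorff's theorem does the essential work. The strategy is to express the two-point function of $\omega_0\,\beta_{\bl}$ as the Fourier transform of a tempered positive measure supported in the closed forward light cone $\overline{V_+}$; the coherent contribution is precisely $|\widetilde{\bl}|^2$ integrated against the Lorentz-invariant measure on the forward mass shell, a finite measure in view of $\|\bl\|_1 < \infty$. Positivity of the generator of time translations then follows by a standard Fourier-analytic argument; the explicit diagonalisation and verification of tempered growth are routine and constitute natural content for Appendix~\ref{sectC}.
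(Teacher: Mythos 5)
Your first half is sound and coincides with the paper's own mechanism: everything turns on the observation that, although $\bl$ itself may fail to lie in $L_0(\RR^4)$, the difference $\bl - \bl_x$ always does, because $|1-e^{-i(x_0|\bp|-\bp\bx)}|^2 = O(|\bp|^2)$ near $\bp=0$ kills the infrared discrepancy between the weights $1/|\bp|$ and $1/\sqrt{1+|\bp|^2}$. This is exactly the paper's remark that ``small momenta in $\bl$ are suppressed in the differences $\bl-\bl(x)$'', and it is the correct starting point.

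The second half has two genuine gaps. First, the multiplier argument: a jointly continuous phase-valued $2$-cocycle on $\RR^4$ is \emph{not} automatically a coboundary merely because $\RR^4$ is simply connected --- the antisymmetric bilinear phases underlying the Weyl relations \eqref{e.2.2} are themselves nontrivial multipliers on a vector group. You would have to check by hand that the cocycle produced by your intertwiners has vanishing antisymmetric part. Second, and more seriously, your justification of the spectrum condition fails where it is concrete: the measure $|\widetilde{\bl}(p)|^2\,d\bp/|\bp|$ on the forward mass shell (the Lorentz-invariant measure at mass zero) is finite if and only if $\bl\in L_0(\RR^4)$; in the infrared-singular case $\bl\in L_1(\RR^4)\setminus L_0(\RR^4)$, the only case where there is anything to prove, it diverges at $\bp=0$. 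So the ``finite positive measure supported in $\overline{V_+}$'' does not exist in the form you describe, and the appeal to Roepstorff cannot be closed along these lines without reproducing his actual estimates (which involve the weight $1/\sqrt{1+|\bp|^2}$, not the invariant measure). The paper deliberately avoids Roepstorff's route and sidesteps both difficulties at once: it sets $\widetilde{\bl_\varepsilon}\coloneqq\chi_\varepsilon\widetilde{\bl}$ with an infrared cutoff $\chi_\varepsilon$, so that $U_{\bl_\varepsilon}(x)=e^{i\bA(\bl_\varepsilon)}U_0(x)e^{-i\bA(\bl_\varepsilon)}$ is a genuine continuous unitary representation unitarily equivalent to $U_0$ (hence satisfying the spectrum condition for free), and then uses precisely your estimate on $\bl_\varepsilon-\bl_\varepsilon(x)$ to show that $U_{\bl_\varepsilon}(x)$ converges strongly, uniformly for $x$ in compact sets. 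Both the group law and the spectrum condition survive this limit, so neither Fourier analysis of two-point functions nor a cocycle trivialization is needed.
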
  
For the construction of charged states, we return to 
the functions $m_s \in \cD_1(\RR^4)$, which are
obtained by extending the $u$-integral
in equation \eqref{e.3.8} to $s \geq 1$. As was explained at
the end of Sect.~\ref{sec2}, local charge distributions are induced by  
automorphisms $\beta_{\bn_s - \bm_s}$ on $\, \pmb{\efV}$. 
In the vacuum representation they are given by the adjoint
action of the unitaries $e^{i \bA(\bn_s - \bm_s)}$, where both  
$\bm_s$ (the spatial components of $m_s$)  and
$\bn_s = \nabla \Delta^{-1} \partial_0 \, m_{s \, 0}$
are elements of $L_0(\RR^4)$. We split
\be \label{e.4.2} 
(\bn_s - \bm_s)  
= \nabla \Delta^{-1}(\partial_0 \, m_{s \, 0} - \nabla \bm_s)
-(\bm_s - \nabla \Delta^{-1} \nabla \bm_s)
=  \nabla \Delta^{-1} \delta m_s - \bm_s^\perp \, .
\ee
Making use of the Weyl relations \eqref{e.2.2}, the automorphisms
$\beta_{\bn_s - \bm_s}$ factorize,
\be
\beta_{\bn_s - \bm_s} = \beta_{- \bm_s^\perp} \, 
\beta_{\nabla \Delta^{-1} \delta m_s} \, , \quad s \geq 1 \, .
\ee
As explained in Sect.~\ref{sec3},
the functions $\bm_s^\perp$ converge in the limit
of large $s$ to  \mbox{$\bm^\perp_\infty \in L_1(\RR^4)$}.
Hence the corresponding automorphisms
$\beta_{-\bm^\perp_s}$ converge pointwise in
norm on~$\, \pmb{\efV}$. 
The automorphisms $\beta_{\, \nabla \Delta^{-1} \delta m_s}$
(inducing the gauge bridge between the charges) 
also converge, as is shown 
in the subsequent lemma. There we make use of
relation \eqref{e.3.10} according to which
$\delta m_s$ is the difference between the charge distributions
at the initial and final points of the
underlying path, \ie $\delta m_s = \varrho_0 - \varrho_s$.
\begin{lemma} \label{l.4.2}
  The automorphisms $s \mapsto \beta_{\, \nabla \Delta^{-1} \delta m_s}$ 
  converge pointwise in norm on the alge\-bra $\pmb{\efV}$ when $s$
  tends to infinity. Their limit is 
  $\beta_{\, \nabla \Delta^{-1} \varrho_0}$, where
  $\nabla \Delta^{-1} \varrho_0 \in L_1(\RR^4)$.
  \end{lemma}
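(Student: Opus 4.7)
The strategy is to decompose $\beta_{\nabla \Delta^{-1} \delta m_s}$ into a product using \eqref{e.3.10}, which gives $\delta m_s = \varrho_0 - \varrho_s$, whence $\nabla \Delta^{-1} \delta m_s = \nabla \Delta^{-1} \varrho_0 - \nabla \Delta^{-1} \varrho_s$. Since $\beta_\bl$ acts on the Weyl generator $e^{i \bA(\bg)}$ as multiplication by the phase $e^{i \varphi_\bl(\bg)}$ and $\varphi_\bl$ is linear in $\bl$, one obtains the factorisation $\beta_{\nabla \Delta^{-1} \delta m_s} = \beta_{\nabla \Delta^{-1} \varrho_0} \circ \beta_{-\nabla \Delta^{-1} \varrho_s}$. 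The proof will then split into two independent tasks: (a) establish $\nabla \Delta^{-1} \varrho_0 \in L_1(\RR^4)$, so that $\beta_{\nabla \Delta^{-1} \varrho_0}$ is a well-defined outer automorphism of $\pmb{\efV}$ by Lemma~\ref{l.4.1}; (b) show $\beta_{-\nabla \Delta^{-1} \varrho_s} \to \mathrm{id}$ pointwise in norm on $\pmb{\efV}$ as $s \to \infty$.

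Task (a) will be immediate. The function $\varrho_0(x) = \theta(x_0)(x_0/\sqrt{x^2})\vartheta_0(\sqrt{x^2})\vartheta_1(\bx)$ is smooth and compactly supported in $V_+$ (the branch point of $\sqrt{x^2}$ on $\partial V_+$ is avoided by the support of $\vartheta_0$), so $\tilde\varrho_0$ is Schwartz. Finiteness of $\|\nabla \Delta^{-1} \varrho_0\|_1^2 = \int (d\bp/\sqrt{1+|\bp|^2})\,|\tilde\varrho_0(|\bp|,\bp)|^2/|\bp|^2$ will then follow because the $|\bp|^{-2}$ singularity at the origin is tamed by the $|\bp|^2 d|\bp| d\Omega$ spherical measure, while Schwartz decay dominates at large $|\bp|$.

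For task (b), I would use density of the $*$-subalgebra of Weyl polynomials and isometry of the automorphisms $\beta_\bl$ to reduce the problem to showing $\varphi_{\nabla \Delta^{-1} \varrho_s}(\bg) \to 0$ for each $\bg \in \pmb{\cD}_1(\RR^4)$. Using $\tilde D(p) = -2\pi i\,\mathrm{sgn}(p_0)\delta(p^2)$, this functional becomes a positive-mass-shell integral of $|\bp|^{-3}\,\bp \cdot \tilde\bg\,\tilde\varrho_s$ (up to complex conjugates). The key technical input will be the pointwise convergence $\tilde\varrho_s(|\bp|,\bp) \to 0$ for every $\bp \neq 0$: after substituting $\bx \mapsto \bx + s\bz$ in the integral representation of $\varrho_s$ and expanding $\sqrt{\tau^2 + (\bx + s\bz)^2} = s|\bz| + \hat\bz \cdot \bx + O(1/s)$ (legitimate because $\sigma$ is supported on $R \le |\bz| \le \bar R$ with $R > 0$), the transform will carry a phase $\exp[i s(\bp \cdot \bz - p_0 |\bz|)]$. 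On shell this phase is stationary in $\bz$ only when $\bz$ is collinear with $\bp$, and stationary-phase estimates (or iterated angular integration by parts against the smooth compactly supported $\sigma$) will give the required decay. To close the argument by dominated convergence, I will combine the trivial bound $|\tilde\varrho_s| \le \|\vartheta\|_1 \|\sigma\|_1$, the vanishing $\bp \cdot \tilde\bg(|\bp|, \bp) = O(|\bp|)$ at the origin, and the Schwartz decay of $\tilde\bg$ at infinity to produce an integrable dominant.

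The main obstacle will be the infrared region $\bp \to 0$. Because the integrated charge $\tilde\varrho_s(0,0) = q$ is preserved under the shift, $\nabla \Delta^{-1} \varrho_s$ does \emph{not} tend to zero in the $L_1$-seminorm, so a direct $L_1$ convergence argument is unavailable. The vanishing of $\varphi_{\nabla \Delta^{-1} \varrho_s}(\bg)$ instead has to be extracted from the vector prefactor $\bp \cdot \tilde\bg$, which kills one power of $|\bp|$ at the origin and keeps the contribution from any small ball $\{|\bp| < \delta\}$ of order $\delta$ uniformly in $s$. This structural cancellation reflects the physical picture that the Coulomb field of a partner charge receding at the speed of light produces no net effect on an inertial observer confined to a light cone.
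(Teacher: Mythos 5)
Your proposal is correct and follows essentially the same route as the paper: the same factorisation $\beta_{\nabla\Delta^{-1}\delta m_s}=\beta_{\nabla\Delta^{-1}\varrho_0}\,\beta_{\nabla\Delta^{-1}\varrho_s}^{-1}$, the same reduction to showing $\varphi_{\nabla\Delta^{-1}\varrho_s}(\bg)\to 0$ via the mass-shell integral weighted by $(\bp\,\widetilde{\bg})(p)/|\bp|^3$, and the same mechanism for the decay of $\widetilde{\varrho}_s$ at fixed $\bp\neq 0$. The only cosmetic difference is that the paper obtains the oscillatory decay from the bound $c\,(1+s|\bp|)^{-1}$ for solutions of the wave equation with test-function Cauchy data (Reed--Simon XI.19(c)) and concludes directly, whereas you invoke stationary phase plus dominated convergence; both are valid.
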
  
\begin{proof}
  Since $\varrho_0, \, \varrho_s$ are test functions,
  $\nabla \Delta^{-1} \varrho_0$ and $\nabla \Delta^{-1} \varrho_s$
  are contained in $L_1(\RR^4)$. So one can factorize the automorphism, 
  $\beta_{\, \nabla \Delta^{-1} \delta m_s} =
  \beta_{\, \nabla \Delta^{-1} \varrho_0} \, \beta_{\, \nabla \Delta^{-1} \varrho_s}^{-1}$.
  We must show that the second factor converges to the identity.
  According to relations \eqref{e.2.5} and \eqref{e.2.6},
  this is accomplished by showing that the functionals
  \be \label{e.4.3}
  s \mapsto \varphi_{\, \nabla \Delta^{-1} \varrho_s}(\bg)
  = \varphi_{\Delta^{-1} \varrho_s}(\nabla \bg) \, , 
  \quad \bg \in \, \pmb{\cD}_1(\RR^4) \, ,
  \ee
  converge to $0$. The Fourier transform of $\varrho_s$ is given by
  (omitting constant factors)
  \be \label{e.4.4}
  p \mapsto \widetilde{\varrho_s}(p) =
  \int \! d\tau \vartheta_0(\tau) \int \! d\bx_1 d\bx_2
   \, \vartheta_1(\bx_1) \sigma(\bx_2 - \bx_1/s)
  e^{i s (p_0 \sqrt{\tau^2/s^2 + \, \bx_2^2 \, } - \bp \bx_2)} \, .
  \ee
Whence, the functional \eqref{e.4.3} is equal to the imaginary part of 
\be \label{e.4.5}
\hspace*{-2mm}
\int \! \! \frac{d \bp}{2 |\bp|^3} \, (\bp \, \widetilde{\bg})(p) \! \! 
\int \! d\tau \vartheta_0(\tau) \! \!  \int \! d\bx_1 d\bx_2
\, \vartheta_1(\bx_1) \sigma(\bx_2 - \bx_1/s)
e^{-i s (p_0 \sqrt{\tau^2/s^2 \, + \, \bx_2^2 \, } - \bp \bx_2)}  \, 
\Big|_{p_0 = | \bp |}.
\ee
Since all functions involved are test functions,
this integral is absolutely integrable. 
In view of the support properties of $\vartheta_0$, $\vartheta_1$, and
$\sigma$, a simple estimate shows that for large~$s$ the terms 
$\bx_1/s$ in the argument of $\sigma$ and $\tau^2/s^2$ in the square
root of the exponential can be neglected.
Recalling the normalization \eqref{e.3.7}, the expression~\eqref{e.4.5}
converges to  
\be \label{e.4.6}
q \int \! \frac{d \bp}{2|\bp|^3} \, (\bp \, \widetilde{\bg})(p) \! 
\int \! d\bx_2 \, \sigma(\bx_2) e^{i s ( p_0 |\bx_2| - \bp \bx_2)}  \, 
\Big|_{p_0 = | \bp |} \, .
\ee
The second (unrestricted) integral is a solution of the wave equation
in $p$. Since $\sigma$ is a test function which vanishes in
a neighborhood of $0$, its Cauchy data are test functions. So the modulus
of the integral is bounded
in light-like directions $p_0 = | \bp |$
by $c \, (1 + s |\bp|)^{-1}$, cf.\ \cite[Thm.\  XI.19(c)]{ReSi}. 
It is then clear that the expression \eqref{e.4.6} vanishes if $s$ tends to
infinity, completing the proof.
\end{proof}
Since the functions $\nabla \Delta^{-1} \varrho_0$ and
$\bm^\perp_\infty$ are both elements of $L_1(\RR^4)$, the
following statement is an immediate consequence of Lemma \ref{l.4.2}.
\begin{proposition} \label{p.4.3}
  Let $ (\nabla \Delta^{-1} \varrho_0 - \bm^\perp_\infty) \in L_1(\RR^4)$
  be the gauge
  bridge and radiation field obtained by transporting 
  in a pair of opposite charges in $V_+$
  one of the charges to light-like infinity within a hypercone.
  The charged limit
  state $\omega_0 \, \beta_{(\nabla \Delta^{-1} \varrho_0 - \bm^\perp_\infty)}$
  on $\pmb{\efV}$ induces a representation in which the spacetime
  translations are unitarily implemented. Its generators
  satisfy the relativistic spectrum condition. 
\end{proposition}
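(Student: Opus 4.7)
The plan is to reduce the proposition directly to Lemma~\ref{l.4.1}, which asserts that for any $\bl \in L_1(\RR^4)$ the state $\omega_0 \, \beta_{\bl}$ on $\, \pmb{\efV}$ induces a representation of the spacetime translations by unitary operators whose generators satisfy the relativistic spectrum condition. It therefore suffices to verify that the combined gauge-bridge and radiation function $\bl \coloneqq \nabla \Delta^{-1} \varrho_0 - \bm^\perp_\infty$ belongs to $L_1(\RR^4)$.

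Both summands lie in $L_1(\RR^4)$ individually. For $\nabla \Delta^{-1} \varrho_0$ this is recorded at the beginning of the proof of Lemma~\ref{l.4.2}: as $\varrho_0$ is a test function produced via \eqref{e.3.10}, the infrared singularity introduced by $\Delta^{-1}$ is mild enough to be integrated against the measure in~\eqref{e.4.1} with $\iota = 1$. For the radiation part $\bm^\perp_\infty$, the pointwise bound \eqref{e.3.15} on the positive light cone yields $|\widetilde{\bm}^\perp_\infty(p)| \leq c_k |\bp|^{-1-k}$ with $0 < k < 1/4$; inserted into \eqref{e.3.14} the integrand is majorized by $|\bp|^{-2-2k}/\sqrt{1+|\bp|^2}$, which is integrable over $\RR^3$ for this range of $k$ both near the origin (where the exponent $-2-2k$ is above the three-dimensional integrability threshold $-3$) and at infinity (where the extra factor $|\bp|^{-1}$ from the square root keeps the tail finite). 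Since $L_1(\RR^4)$ is a linear space, $\bl \in L_1(\RR^4)$.

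Applying Lemma~\ref{l.4.1} with this $\bl$ then delivers the conclusion of the proposition. The identification of $\omega_0 \, \beta_{\bl}$ as the physical limit of the approximating charge-creating states $\omega_0 \, \beta_{\bn_s - \bm_s}$ is already guaranteed by combining Lemma~\ref{l.4.2} (convergence of the longitudinal part) with the norm convergence of $\beta_{-\bm^\perp_s}$ to $\beta_{-\bm^\perp_\infty}$ established in Section~\ref{sec3}: using the factorization $\beta_{\bn_s - \bm_s} = \beta_{-\bm^\perp_s} \, \beta_{\nabla \Delta^{-1} \delta m_s}$ together with the Weyl relations~\eqref{e.2.2}, whose scalar phase cancels in the adjoint action, the compositions converge pointwise in norm on $\, \pmb{\efV}$ to $\beta_{\nabla \Delta^{-1} \varrho_0 - \bm^\perp_\infty}$. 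There is no genuine obstacle beyond these convergence and integrability checks, which were precisely the point of the preceding lemmas and of Section~\ref{sec3}; this is why the proposition can be described as an immediate consequence of Lemma~\ref{l.4.2}.
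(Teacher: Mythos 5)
Your proposal is correct and follows essentially the same route as the paper: the proposition is obtained by applying Lemma~\ref{l.4.1} to $\bl = \nabla \Delta^{-1} \varrho_0 - \bm^\perp_\infty$, whose membership in $L_1(\RR^4)$ follows from the two summands lying in $L_1(\RR^4)$ individually (the first as noted in the proof of Lemma~\ref{l.4.2}, the second via the bound~\eqref{e.3.15}), with the identification of the limit state supplied by the factorization of $\beta_{\bn_s - \bm_s}$ and the convergence statements of Section~\ref{sec3} and Lemma~\ref{l.4.2}. Your explicit integrability check of~\eqref{e.3.14} near the origin and at infinity is a correct elaboration of what the paper leaves implicit.
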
  
It remains to establish the  normality of the
states $\omega_0 \, \beta_{(\nabla \Delta^{-1} \varrho_0 - \bm^\perp_\infty)}$
relative to the vacuum representation on all 
subalgebras $\, \pmb{\efV}(V_+ + x)$, $x \in \RR^4$.
We again use the factorization of the automorphisms,
$\beta_{(\nabla \Delta^{-1} \varrho_0 - \bm^\perp_\infty)}
=  \beta_{- \bm^\perp_\infty}  \beta_{\nabla \Delta^{-1} \varrho_0}$, 
and the covariance of the resulting 
representation of $\, \pmb{\efV}$ with respect to spacetime translations.
Thus we need to establish normality only for a single
light cone algebra. As in the preceding Sect.~\ref{sec3}, 
we choose a positive time translation $t$ such that 
the support of $m_1$ in equation~\eqref{e.3.9} (hence that
of $\nabla \varrho_0$) lies in the backward light cones
$V_- + t$. The translated test functions
\be
x \mapsto \nabla \varrho_{0 , s}(x) \coloneqq \nabla \varrho_0(x_0 - s, \bx) \, ,
\quad s \leq 0 \, ,
\ee
then also have support in $V_- + t$.
Thus, as was explained at the end of Sect.~\ref{sec3},
one obtains by a double Laplace transformation,
extending now over negative times, a test function with
support in $V_- + t$. Its Fourier transform is (apart from 
a constant factor) 
\be
p \mapsto (p_0 + i \varepsilon)^{-2} \, \bp \, \widetilde{\varrho}_0(p) \, . 
\ee
Restricting it to light-like momenta, $p_0 = | \bp |$, it is
apparent that it converges strongly in
$L_1(\RR^4)$ to $\nabla \Delta^{-1} \varrho_0$
in the limit of small $\varepsilon$. 
Taking Huygens's principle into account, this implies that the functional
$\bg \mapsto \varphi_{\nabla \Delta^{-1} \varrho_0}(\bg)$ 
vanishes on all test functions \mbox{$\bg \in \pmb{\cD}_1(V_+ + t)$}.
Hence $\beta_{\nabla \Delta^{-1} \varrho_0}$ acts like the identity 
on the algebra $\, \pmb{\efV}(V_+ + t)$. It
shows that effects of the gauge bridges observable in $V_+$ disappear
completely within a limited time $t$ in the complement of $V_+ + t$.

\medskip
A direct consequence of this observation is that the particle
properties of ``longitudinal photons'' (\eg their dispersion law)
cannot be accurately determined by inertial observers, in contrast
to those of real photons.  This casts light on the statement
in textbooks that longitudinal photons are to be regarded as
``virtual particles''. 

\medskip 
Turning to the radiation field induced by $\beta_{- \bm^\perp_\infty}$,
we have shown in Sect.~\ref{sec3} that
\be
(\bm^\perp_\infty - \bm^\perp_{reg})
\upharpoonright V_+ \backslash (V_- + t)
= 0 \, ,
\ee
where $\bm^\perp_{reg} \in L_0(\RR^4)$. Hence 
\be
(\beta_{-\bm^\perp_\infty} - \beta_{-\bm^\perp_{reg}}) \upharpoonright
\pmb{\efV}(V_+ + t) = 0 \, ,
\ee
where 
$\beta_{-\bm^\perp_{reg}}$ is implemented in the  vacuum representation of
$\, \pmb{\efV}$ by the adjoint action of unitary operators.
So, in summary, we have
\be
( \omega_0 \beta_{(\nabla \Delta^{-1} \varrho_0 - \bm^\perp_\infty)}^{}
- \omega_0 \beta_{-\bm^\perp_{reg}} ) \upharpoonright \pmb{\efV}(V_+ + t) = 0 \, , 
\ee
where $\omega_0 \beta_{-\bm^\perp_{reg}}$ is a state in the vacuum
representation of $\ \pmb{\efV}$. This is proof of the following statement.
\begin{proposition} \label{p.4.4}
  Let $\omega_0 \beta_{(\nabla \Delta^{-1} \varrho_0 - m_\infty^\perp)}$
  be a charged state as in Proposition \ref{p.4.3}. Its restriction to
  any light cone algebra $\ \pmb{\efV}(V_+ + x)$, $x \in \RR^4$, is
  normal relative to the vacuum representation of $\, \pmb{\efV}$.
  In particular, the gauge bridges emanating from charged states
  cannot be distinguished in light cones from those
  appearing in the vacuum representation. Furthermore,
  their impact disappears for inertial observers within an
  instant.
\end{proposition}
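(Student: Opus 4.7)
The plan is to exploit the factorization
$\beta_{(\nabla \Delta^{-1} \varrho_0 - \bm^\perp_\infty)}
= \beta_{-\bm^\perp_\infty} \, \beta_{\nabla \Delta^{-1} \varrho_0}$
of the charge-creating automorphism into a radiation part and a gauge bridge part (using the Weyl relations \eqref{e.2.2} and the fact that adjoint action is insensitive to central phases), and to analyze the two factors separately on a suitably chosen light cone algebra. Since the representation induced by this charged state is translation-covariant by Proposition \ref{p.4.3}, I would first reduce to establishing normality on a single algebra $\pmb{\efV}(V_+ + t)$; the general case $\pmb{\efV}(V_+ + x)$ then follows by conjugating with the unitary implementer of the translation $x-t$. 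I would pick $t \in V_+$ large enough that the support of the regularized density $m_1$, and hence of $\nabla \varrho_0$, lies in the past cone $V_- + t$.

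The first step is to show that the gauge bridge factor $\beta_{\nabla \Delta^{-1} \varrho_0}$ acts trivially on $\pmb{\efV}(V_+ + t)$. The idea is to approximate $\nabla \Delta^{-1} \varrho_0$ on the positive mass shell by objects coming from test functions supported in $V_- + t$, because then Huygens's principle forces the functional $\varphi_{\nabla \Delta^{-1} \varrho_0}$ to vanish on $\pmb{\cD}_1(V_+ + t)$. Following the construction at the end of Section \ref{sec3}, I would apply a double Laplace transform to negative translates $\nabla \varrho_{0,s}$ of $\nabla \varrho_0$, producing test functions with support in $V_- + t$ whose Fourier transforms carry the regulating factor $(p_0 + i\varepsilon)^{-2}$. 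Verifying strong $L_1(\RR^4)$-convergence of these to $\nabla \Delta^{-1} \varrho_0$ as $\varepsilon \downarrow 0$, combined with the support information, then yields the desired triviality of $\beta_{\nabla \Delta^{-1} \varrho_0}$ on the chosen light cone algebra.

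The second step is to absorb the radiation factor $\beta_{-\bm^\perp_\infty}$ into the vacuum representation. Here I would invoke the companion analysis from Section \ref{sec3}: the regularized wave function $\bm^\perp_{reg}$ belongs to $L_0(\RR^4)$ and agrees with $\bm^\perp_\infty$ on $V_+ \setminus (V_- + t)$, so $\beta_{-\bm^\perp_\infty}$ and $\beta_{-\bm^\perp_{reg}}$ coincide on $\pmb{\efV}(V_+ + t)$, and the latter is implemented in the vacuum Fock representation by the adjoint action of $e^{-i\bA(\bm^\perp_{reg})}$. Concatenating the two steps, the restriction of the charged state to $\pmb{\efV}(V_+ + t)$ factors through a vacuum vector state, which is exactly the normality claim; the two physical consequences stated in the proposition then read off directly from triviality of the gauge bridge outside $V_- + t$.

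The main obstacle I anticipate is quantitative control in the first step: I must verify that the Laplace-regularized approximants converge to $\nabla \Delta^{-1} \varrho_0$ not merely pointwise on the mass shell but in the $L_1(\RR^4)$-seminorm \eqref{e.4.1}, since that is the topology in which the gauge-bridge automorphism depends continuously on its argument and it is needed to pass vanishing of $\varphi_{\cdot}$ on $\pmb{\cD}_1(V_+ + t)$ through the limit. Once this convergence together with the geometric support claim is secured, the rest is a clean concatenation with Propositions \ref{p.3.2} and \ref{p.4.3}.
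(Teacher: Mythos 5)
Your proposal follows essentially the same route as the paper: the factorization $\beta_{(\nabla \Delta^{-1} \varrho_0 - \bm^\perp_\infty)} = \beta_{-\bm^\perp_\infty}\,\beta_{\nabla \Delta^{-1} \varrho_0}$, reduction to a single cone $\pmb{\efV}(V_+ + t)$ via translation covariance, triviality of the gauge-bridge factor there via the double Laplace transform, Huygens's principle and $L_1(\RR^4)$-convergence, and replacement of $\bm^\perp_\infty$ by $\bm^\perp_{reg} \in L_0(\RR^4)$ from Proposition \ref{p.3.2} for the radiation factor. The technical point you flag (strong convergence of the regularized approximants in the seminorm \eqref{e.4.1}) is exactly the one the paper verifies, so the argument is complete as outlined.
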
  
\section{Conclusions} \label{sect5}
\setcounter{equation}{0}

In this article, we have analyzed the impact of electric
charges on the electromagnetic field that can be tested and perceived by
real observers. In doing so, we have taken seriously the fact that
experiments can only be performed in parts of Minkowski space, at best
within future directed light cones.
To simplify the analysis, we have made the assumption that the charges
are external in nature, \ie they act on the electromagnetic field, but
there is no back-reaction. One can then consistently assume that the
charges are compactly localizable. (In case of mutual
interactions this is not possible \cite{BuDoMoRoSt}.) Furthermore,
the gauge bridges connecting the charges can be described
by hypothetical particles with discrete mass zero (longitudinal photons).
We believe that the findings obtained in the present analysis do
not depend significantly on these idealizations.
In particular, the remarkable qualitative differences between
the theory restricted to light cones and the global Minkowskian
theory appear to be of general validity.

\medskip
First, the production of charged states from neutral
bi-localized states in a light cone requires that one of the
charges be moved to light-like infinity,
\ie into the space-like complement of the observers.
This compensating charge must be brought to the speed
of light, which can certainly only be achieved approximately
in view of the energy required. Nevertheless, the energy
of the electromagnetic field created by this operation
is well defined and 
stays bounded below also in the limit, as we have shown.
In Minkowski space, one frequently discusses  
that the compensating charge is moved to
space-like infinity. This costs little if no energy.
But this procedure is only possible in theory and can not be
accomplished by real operations, not even approximately.

\medskip
Second, we have seen that the notorious infrared clouds
of massless particles predicted by Minkowskian theory in the presence
of charges do not occur in light cones. All modes of the radiation
field that can be resolved in light cones can be described by states
in the Fock space of the photons. Since the Lorentz transformations
are unitarily implemented in Fock space and light cone algebras are
stable under their adjoint action, the restriction of the theory to
a light cone induces a Lorentz-invariant infrared cutoff. In this manner,
the infrared radiation is quenched geometrically.

\medskip
Finally, we have shown that the longitudinal photons that give rise to
Gauss's law in the gauge bridges between the charges can hardly be observed.
Their starting points are the trajectories of the charges.
From there they travel at the speed of light. They therefore disappear
quickly in the space-like complement of inertial observers. This makes it
impossible to determine their dispersion law with precision.
Although they are particles with a discrete mass, they elude systematic
observation and are virtual in this sense. Without engaging
in speculation, it seems possible that such
elusive particles are part of nature. 

\begin{appendix}

\section{Appendix} \label{sectA} 
\setcounter{equation}{0}

Here we carry out the missing steps
in the proof of Proposition \ref{p.3.1}. Starting point
is the sequence of functionals $s \mapsto \varphi_{m_s}$, cf.~\eqref{e.2.6}.
The functions $m_s$ are given by \eqref{e.3.9}, where the integral 
with respect to~$u$ extends from $0$ to any $s \geq 1$.
We will establish the convergence
of the functionals in the limit of large~$s$ on the subspace of
three-vector-valued test functions~$\, \pmb{\cD}_1(V_+)$.
The transverse parts $\, \pmb{m}_\infty^\perp$ of the underlying limit
distributions 
$m_\infty$ will be shown to satisfy relation
\eqref{e.3.15}; hence they are elements of $L_1^\perp(\RR^4)$.
As outlined in Sect.\ \ref{sec3}, the statement then follows from
results of Roepstorff \cite{Ro}. 

\medskip
Since the underlying charge distributions are assumed to have
the product form \eqref{e.3.11} and the 
functionals $\varphi_{m_s}$ are integrated with
three-vector-valued test functions,
we only need to consider the spatial components of $m_s$,
\be \label{e.a.1}
x \mapsto \bm_s(x) = \theta(x_0) (x_0/\sqrt{x^2}) \, \vartheta_0(\sqrt{x^2}) \, 
\int_0^s \! du \! \int \! d\by \, \vartheta_1(\bx - s \by) \sigma(\by) \by \, .
\ee
Because of the choice of $\vartheta_0$, $\vartheta_1$, and $\sigma$, 
the derivatives $\partial_s m_s$ are test functions
which have for given $s \geq 1$ support in the region 
\be \label{e.a.2}
\{ x \in \RR^4 : x_0 > 0, \ 0 < \tau_1 \leq \sqrt{x^2} \leq \tau_2 \, , \ 
sR - r \leq | \bx | \leq s\overline{R} + r \} \, .
\ee
For each compact region $\mO$ in the (open) light-cone $V_+$, the
former region lies in the spacelike complement of $\mO$  
if $s$ is sufficiently large. 

\medskip
Now, given any 
$\bfy \in \pmb{\cD}_1(V_+)$ with support in $\mO$,
its convolution with the commutator function, $x \mapsto (D \bfy)(x)$, 
is smooth and has support in the union of the forward and
backward light-cones having their apex in $\mO$. Hence the
function $s \mapsto \int \! dx \, \partial_s m_s(x) (D \bfy)(x)$
vanishes for large $s$. This implies that
$s \mapsto \int_0^s \! du \,  \int \! dx \, \partial_u \bm_u(x) (D \bfy)(x)
= \varphi_{\bm_s}(\bfy)$ is constant for large $s$.
It is proof of the statement that the functionals
$s \mapsto \varphi_{\bm_s}$ converge in the limit of
large $s$ on $\, \pmb{\cD}_1(V_+)$ to a functional
$\varphi_{\bm_\infty}$. 

\medskip
For the analysis of the transverse component $\bm_\infty^\perp$ of the
limit distribution $m_\infty$, we proceed to 
momentum space. Making use of time shell coordinates, the     
Fourier transform of~$\bm_\infty^\perp$
is given by (omitting constant factors)
\be \label{e.a.3}
p  \mapsto \widetilde{\bm}_\infty^\perp(p) =
\int \! d\tau d\bx \, e^{i(p_0 \sqrt{\tau^2 + \bx^2} - \bp \bx)}
\vartheta_0(\tau) \brho_{\bn}(\bx) \, , 
\ee
where 
\be \label{e.a.4}
\bx \mapsto \brho_{\bn}(\bx) \coloneqq 
\int_0^\infty \! \! \! du \! \! \int \! d\by \, 
\vartheta_1(\bx - u \by) \, \sigma(\by) (\by - \bn (\bn \by)) \, ,
\quad \bn \coloneqq \bp/| \bp |  \, .
\ee
Given $\bx \in \RR^3$, one sees by taking into account the support properties
of $\vartheta_1$ and $\sigma$ 
that the latter integral extends over the compact region
\be 
\{ (u, \by) \in \RR_+ \times \RR^3 :
R \leq |\by| \leq \bar{R} \, , \  |\bx - u \by| \leq r \} \, .
\ee
The derivatives of $\brho_{\bn}$ with regard to the components
of~$\bx$ can thus be interchanged with the integrals, where they act
on the test function $\vartheta_1$. Hence $\brho_{\bn}$ is smooth.
For the determination of the asymptotic properties of $\brho_{\bn}$, let
$|\bx| > r$. The integral 
then extends over the spatial region  
$\{ \by \in \RR^3 :
\by \bx \geq | \by | \sqrt{\bx^2 - r^2} \, , \
R \leq | \by | \leq \overline{R} \}$
and the interval
\be
\{ u \in \RR_+ : | u \, \by^2 - (\by \bx) | \leq
\sqrt{ (\by \bx)^2 - \by^2(\bx^2 - r^2) } \} \, .  \tag{A.6} 
\ee
The length of this interval is bounded by 
$2 \sqrt{(\bex_y \bx)^2 - (\bx^2 - r^2)}/R$,
where $\bex_y \coloneqq \by / | \by | $, 
and its lower edge is bigger than $(| \bx | - r)/\bar{R}$.
Based on this information on 
the integrand in relation \eqref{e.a.4}, one obtains  
an upper bound of $| \brho_{\bn} |$ if one 
replaces $\vartheta_1$ and $\sigma$
by the supremum of their modulus. Performing the 
$\by$-integration in spherical
coordinates and the $u$-integration, results in the  estimate 
\be \label{e.a.7} 
| \brho_{\bn}(\bx) | \leq c
\int \! d \hspace{1pt}
\Omega_{\bex_x}(\bex_y) \, \sqrt{(\bex_y \bx)^2 - (\bx^2 - r^2) } \, 
\sqrt{1 - (\bex_y \bn)^2} \, ,  \tag{A.7}
\ee
Here $d \hspace{1pt} \Omega_{\bex_x}(\bex_y)$ denotes the spherical measure 
restricted to $\{ \bex_y : \bex_y \bex_x \geq \sqrt{1 - r^2/\bx^2} \, \}$, 
where $\bex_x \coloneqq \bx / | \bx |$,
and the constant $c$ involves the product of the supremum of the
underlying test functions
as well as the contributions of the radial integration.
Now
\be  \label{e.a.8} 
(1 - (\bex_y \bn)^2)  \leq  (1 - (\bn \bex_x)^2) +
\sqrt{(1- (\bex_y \bex_x)^2 )} 
\leq (1 - (\bn \bex_x)^2) + r/|\bx| \, .  \tag{A.8} 
\ee
Replacing the second square root in \eqref{e.a.7} by the
square root of this upper bound, a straightforward computation of
the resulting integral yields 
\be  \label{e.a.9}
|\brho_{\bn}(\bx)|
\leq c \big( \sqrt{1 - (\bn \bex_x)^2} + r/|\bx| \big)^{1/2}/|\bx|^2 \, ,
\quad |\bx| \geq r \, .  \tag{A.9}
\ee

\medskip
We also need estimates of the 
Cauchy-Euler derivative $(\bx \bnabla_{\bx})$ of~$\brho_{\bn}$ on $\RR^3$, 
\begin{align} \label{e.a.10}
  (\bx \bnabla_{\bx})
  \brho_{\bn}(\bx) &  = \int_0^\infty \! \! \! du \! \! \int \! d\by \, 
(\bx \bnabla_{\bx}) 
  \vartheta_1(\bx - u \by) \, \sigma(\by) (\by - \bn (\bn \by)) \nonumber \\
  & = \sum_{j = 1}^3 x_j \int_0^\infty \! \! \! du \, (1/u) \! \!
  \int \! d\by \, \vartheta_1(\bx - u \by)
  \big( \partial_{y_j} \sigma(\by) \big) (\by - \bn (\bn \by))
  \nonumber \\
  & \, + \big(\bx - \bn (\bn \bx) \big)
  \int_0^\infty \! \! \! du \, (1/u)
  \! \! \int \! d\by \, \vartheta_1(\bx - u \by)
  \, \sigma(\by) \, . \tag{A.10}
\end{align}
As in the preceding step, one replaces the integrands on the
right hand side of this equality by their modulus. The factor 
$(1/u)$ can then be estimated by its value at the lower
edge of the integration interval, determined by the
test functions. If
$| \bx | \geq 2r $, it results in a factor $2/|\bx|$
that compensates for the increasing factors
in front of the integrals. 
The remaining integrals can be estimated as in the preceding
step, showing that the Cauchy-Euler derivative of~$\brho_{\bn}$ satisfies a 
bound as in \eqref{e.a.9}. This applies also to 
higher powers of the derivative by repeating 
this computation.

\medskip
We can turn now to the
estimate of $\widetilde{\bm}_\infty^\perp$ and establish the bound  
given in relation~\eqref{e.3.15}. To this end we 
use the Cauchy-Euler derivative on $\RR^4$, 
$D \coloneqq \tau \, \partial_\tau + (\bx \bnabla_{\bx})$, 
and consider for given $\bn = \bp/|\bp|$ the function
$\tau, \bx \mapsto \eta_{\bn}(\tau, \bx)
\coloneqq (\sqrt{\tau^2 + \bx^2} - \bn \bx)$.
For any differentiable function $\zeta: \RR \rightarrow \CC$,
one has
$D \, \zeta(\eta_{\bn}(\tau, \bx)) =
\eta_{\bn}(\tau, \bx) \, \zeta'(\eta_{\bn}(\tau, \bx))$,
where $\zeta'$ denotes the derivative. Thus 
\begin{align} \label{e.a.11}
  & |\bp|^2 \! \int \! d\tau d\bx \, e^{i |\bp| \eta_{\bn}(\tau,\bx) } \, 
\vartheta_0(\tau) \brho_{\bn}(\bx) \nonumber \\
& =  \int \! d\tau d\bx \,
\big( (\eta_{\bn}(\tau, \bx)^{-1} D)^2 \,
(1 - e^{i |\bp| \eta_{\bn}(\tau, \bx)}) \big) \, 
\vartheta_0(\tau) \brho_{\bn}(\bx) \nonumber \\
& =  \int \! d\tau d\bx \,
\big((D+1)(D +2) \, \eta_{\bn}(\tau, \bx)^{-2}
(1 - e^{i |\bp| \eta_{\bn}(\tau, \bx)}) \big) \,
\vartheta_0(\tau) \brho_{\bn}(\bx) \nonumber \\
& =  \int \! d\tau d\bx \,
\eta_{\bn}(\tau, \bx)^{-2} (1 - e^{i |\bp| \eta_{\bn}(\tau, \bx)})
(2 + D)(3 + D) \vartheta_0(\tau) \brho_{\bn}(\bx) \, . \tag{A.11}
\end{align}
The first two equalities arise by the action
of $D$ on functions of $\eta_{\bn}$ and 
the last equality obtains by partial integration,
where $D$ is converted to 
$D^T \coloneqq -(\partial_\tau \tau + (\bnabla_{\bx} \bx))
= -(4 + D)$.  
The resulting function
$\tau, \bx \mapsto (2 + D) (3 + D) \vartheta_0(\theta)
\brho_{\bn}(\bx)$ is a finite sum of products of
Cauchy-Euler derivatives   
of $\vartheta_0$ and $\brho_{\bn}$. Thus the
estimate \eqref{e.a.9} can be applied to all terms 
in this sum of integrals. By some abuse of notation,
the corresponding integrands have the form 
\be \label{e.a.12}
\tau, \bx \mapsto \eta_{\bn}(\tau, \bx)^{-2}
(1 - e^{i |\bp| \eta_{\bn}(\tau, \bx)})  
{\vartheta}_0(\tau) {\brho}_{\bn}(\bx) \, .   \tag{A.12} 
\ee
Picking $l$, $3/4 < l < 1$, they are bounded by
\be  \label{e.a.13}
|\eta_{\bn}(\tau, \bx)^{-2} (1 - e^{i |\bp| \eta_{\bn}(\tau, \bx)})  
{\vartheta}_0(\tau) \brho_{\bn}(\bx)|
\leq 2 | \bp |^l | \eta_{\bn}(\tau, \bx)|^{-2+l} |\vartheta_0(\tau)|
|\brho_{\bn}(\bx)| \, . \tag{A.13} 
\ee
Since $| \eta_{\bn}(\tau, \bx)| ^{-1}
\leq (\sqrt{\tau^2 + \bx^2} + |\bx|)/\tau^2$
and the $\tau$-integration extends over the interval
$0 < \tau_1 \leq \tau \leq \tau_2$, the upper bound is absolutely
integrable with regard to $\tau, \bx$
if $|\bx| \leq 2r$. In the region  $|\bx| \geq 2r$ 
one has 
\begin{align}  \label{e.a.14}
  & | \bx | | \eta_{\bn}(\tau, \bx) |^{-1}
  \big( \sqrt{1 - (\bn \bex_x)^2} + r/|\bx| \big)^2
  \tag{A.14}  \\
& \leq \sqrt{4 + \tau^2/r^2} \big(1 - (\bn \bex_x)^2 + \tau^2/\bx^2 \big)^{-1}
\big(1 - (\bn \bex_x)^2 + r^2/\bx^2 \big)
\leq 2 (\tau^2 + r^2)^{3/2}/(\tau r^2) \, ,  \nonumber
\end{align}
\ie the function on the left hand side
is bounded for $\tau_1 \leq \tau \leq \tau_2$
and $|\bx| \geq 2r$. Taking the fourth root
of this inequality and making use of the estimate \eqref{e.a.9}, one can
proceed from \eqref{e.a.13} to 
\be  \label{e.a.15}
| \eta_{\bn}(\tau, \bx)|^{-2+l} |\vartheta_0(\tau)| |\brho_{\bn}(\bx)|
\leq c(\tau,r) \, ( |\bx| \eta_{\bn}(\tau, \bx)^{-1} )^{7/4 - l} |\bx|^{-4 + l} \, .
\tag{A.15}
\ee
The factor $c(\tau,r)$ depends on the product of
the constants on the right hand side
of inequalities \eqref{e.a.9} and \eqref{e.a.14}. 
To prove that this bound is integrable,  
spherical coordinates are introduced again.
Since $7/4 - l < 1$, the spherical integral of
\be  \label{e.a.16}
\bex_x \mapsto  ( |\bx| \eta_{\bn}(\tau, |\bx| \bex_x )^{-1} )^{7/4 - l}
= (\sqrt{1 + \tau^2/\bx^2} - (\bn \bex_x))^{-7/4 + l}
\leq (1 - (\bn \bex_x))^{-7/4 + l}
\tag{A.16}
\ee
is bounded. The radial integral of $\bx \mapsto |\bx|^{-4 + l}$
for $|\bx| \geq 2r$ also exists since $4 - l > 3$. 
Since the modulus of these integrals is  
bounded on $\tau_1 \leq \tau \leq \tau_2$,
the remaining $\tau$-integral is finite as well.
Bearing in mind the factor $|\bp|^2$ in equation \eqref{e.a.11},
it follows from \eqref{e.a.13} that
for any $3/4 < l < 1$ and all $\bp$ one has 
\be  \label{e.a.17}
 \Big| \! \int \! d\tau d\bx \, e^{i |\bp| \eta_{\bn}(\tau, \bx) } \, 
 \vartheta_0(\tau) \brho_{\bn}(\bx) \Big| \leq c_l \,
 | \bp |^{l -2} \, . \tag{A.17} 
\ee
The constants $c_l$ depend on the choice of the test functions
$\vartheta_0$, $\vartheta_1$, and $\sigma$ within the above
limitations.
This completes the proof of the bound for $\bm_\infty^\perp$, 
stated in relation~\eqref{e.3.15}, hence of
Proposition \ref{p.3.1}.

\section{Appendix} \label{sectB} 
\setcounter{equation}{0}

We consider here the restriction of the limit
distribution $m_\infty$ analyzed in Appendix \ref{sectA} 
to the region $V_+ \backslash (V_- + t) $, where $t$ is a
sufficiently large time translation
such that the support of $m_1$ is contained in
$V_- + t$. As we will show, this restriction 
coincides with the corresponding restriction of a 
more regular function~$m_{reg}$ that has its support
in $V_+ \cup (V_- + t)$
and whose transverse part is an element
of $L_0(\RR^4)$. By the explanations in Sect.\ \ref{sec3},
this completes the proof of Proposition~\ref{p.3.2}. 

\medskip 
As outlined in Sect.\ \ref{sec3}, we proceed from the 
paths \eqref{e.3.3} on the time shells $\tau \in \mbox{supp} \, \vartheta_0$, 
\be \label{e.b.1}
u \mapsto w_{\tau,\bsx,\bsy}(u) \coloneqq \big( \sqrt{\tau^2 +
  (\bsx + u \bsy)^2} \, ,
\bsx + u \bsy \big) \, ,
\ee
where $\bsx \in \mbox{supp} \, \vartheta_1$, 
$\bsy \in \mbox{supp} \, \sigma$ and $u \geq 0$.
Plugging these data into equation \eqref{e.3.8} and extending
the $u$-integration to infinity, the function $m_\infty$ obtains. 
For large $u$ these paths behave like
\be \label{e.b.2}
 u \mapsto w_{\tau,\bsx,\bsy}(u) = u \, l + r + u^{-1} a + 0(u^{-2}) \, .
\ee
Here the vector 
$l \coloneqq (|\bsy|, \bsy)$ is light-like,
$r \coloneqq \big((\bsx \bsy)/|\bsy|, \bsx \big)$ is
space-like and orthogonal to~$l$, 
and the acceleration $a \coloneqq (1/2)\big((\tau^2 + \bsx^2)/|\bsy| -
(\bsx \bsy)^2/|\bsy|^3, \, \pmb{0} \big) $ is positive timelike.
The bound $0(u^{-2})$ on the remainder
is uniform for $\tau, \bsx, \bsy$
in the supports of $\vartheta_0, \vartheta_1$, and~$\sigma$,
respectively.
Note that the division by $|\by|$ does not cause
problems since the support of~$\sigma$ does not
contain $0$. We also note that 
the $u$-derivative of $w_{\tau,\bsx,\bsy}$ has the 
asymptotic behavior resulting by differentiation of the
terms in equation \eqref{e.b.2}. 

\medskip 
We proceed from these paths to corresponding counterparts in $V_-$,
putting for $u \leq 0$   
\be 
\label{e.b.3}
u \mapsto \underline{w}_{\tau,\bsx,\bsy}(u) \coloneqq \big( - \sqrt{\tau^2 +
  (\bsx + u \bsy)^2} \, , \bsx + u \bsy \big) \, .
\ee
Their leading asymptotic behavior for large negative $u$ 
coincides with the right hand
side of equation~\eqref{e.b.2}. Thus the deviation of
both paths from a straight light-like line is due to the
acceleration~$a$, which causes radiation. Plugging
$\underline{w}_{\tau,\bsx,\bsy}$ into equation~\eqref{e.2.8}, where the
$u$-integral now extends over the negative reals,
one obtains a function $\underline{m}_\infty$
that fits ${m}_\infty$ and has support in
$V_-$.
Since it is obtained from ${m}_\infty$ merely by a  
reflection of time and the direction of paths,
its Fourier transform has the same properties as those established
for ${m}_\infty$ in Sect.~\ref{sec3}. 

\medskip 
To merge these two functions, we modify the initial and final 
parts of the underlying paths. We restrict the $u$-integral
in ${m}_\infty$ to $u \geq 1$ and that in 
$\underline{m}_\infty$ to $u \leq -1$.
The resulting truncated
functions ${m}_\infty^t$, $\underline{m}_\infty^t$ 
differ from the corresponding original functions by test functions,
cf.\ the remark after equation \eqref{e.3.9}.
To fill the gap between the truncated paths, we connect  
each final point $\underline{w}_{\tau,\bsx,\bsy}(-1)$
with the corresponding initial point $w_{\tau,\bsx,\bsy}(1)$ by a straight
line,
\be \label{e.b.4}
u \mapsto (1/2)\big( (w_{\tau,\bsx,\bsy}(1) + \underline{w}_{\tau,\bsx,\bsy}(-1))
+ u \, (w_{\tau,\bsx,\bsy}(1) - \underline{w}_{\tau,\bsx,\bsy}(-1)) \big) \, ,
\quad -1 \leq u \leq 1 \, .
\ee
Plugging these paths into equation \eqref{e.3.8} with the integral
extending over $-1 \leq u \leq 1$, one obtains a  function
${m}^i$ which interpolates between
$\underline{m}_\infty^t$ and ${m}_\infty^t$.
Its Fourier transform
\begin{align} \label{e.b.5}
& p \mapsto \widetilde{m}^i(p) \nonumber \\ 
& = \int \! d\tau d\bsx d\bsy \, \vartheta_0(\tau)
\vartheta_1(\bsx) \sigma(\by)
\big(e^{ipw_{\tau, \bsx, \bsy}(1)} - e^{ip\underline{w}_{\tau, \bsx, \bsy}(-1)}\big) /
p(w_{\tau, \bsx, \bsy}(1) - \underline{w}_{\tau, \bsx, \bsy}(-1)) 
\end{align}
is regular and bounded. The vector
$(w_{\tau, \bsx, \bsy}(1) - \underline{w}_{\tau, \bsx, \bsy}(-1))$
in the denominator 
is positive timelike and varies smoothly with regard to
$\bsx, \bsy$, and $\tau > 0$.
If one differentiates the exponential functions with respect to
$\tau$ and takes into account the support properties of
the test functions, one finds by
partial integration that the
restriction of $\widetilde{m}^i$ 
to positive light-like vectors
converges rapidly to $0$ for large~$p_0 = |\bsp|$.
  
\medskip 
The function $m_{reg}$ that extends 
${m}_{\infty}$ as described above, is now defined by
\be \label{e.b.6}
m_{reg} \coloneqq \underline{m}_{\infty}^t + m^i + {m}_{\infty}^t \, . 
\ee
The difference 
$(m_\infty - m_{reg})$  has support in
$V_- + t$ by construction. So the action
of the functional $\varphi_{m_\infty}$ 
on test functions with
support in $(V_+ + t)$ coincides with that of 
$\varphi_{m_{reg}}$. By the arguments
given in the last paragraph  of Sect.\ \ref{sec3}, this 
applies also to the transverse parts of their spatial
components, \ie to the radiation field. 

\medskip
For the proof that the transverse part of $m_{reg}$ 
is an element of $L_0(\RR^4)$, we need to show
that the restriction of its Fourier transform
to positive light-like momenta is square integrable
with regard to the measure $d \bsp / | \bsp |$. 
For large $| \bsp |$ this follows for both 
$ \widetilde{m}_\infty^t$ and~$\widetilde{\underline{m}}_\infty^t$ 
from the estimate \eqref{e.3.15},
bearing in mind that the differences between the
original and the truncated
functions are test functions.
The intermediate function $\widetilde{m}^i$
is square integrable according to the preceding
observations. So it remains to show that $\widetilde{m}_{reg}$ is 
sufficiently well-behaved for small momenta.

\medskip
Before going into details, we briefly outline the necessary steps.
We consider for given $\tau, \bsx, \bsy$ the continuous
path $u \mapsto z_{\tau, \bsx, \bsy}(u)$ consisting of 
the three pieces fixed by
equations~\eqref{e.b.3} for $u \leq -1$,
\eqref{e.b.4} for $-1 \leq u \leq 1$, and
\eqref{e.b.1} for $u \geq 1$. This path is
smooth apart from the two contact points,
where its derivative is discontinuous,
though bounded. Plugging this path into
equation \eqref{e.3.4}, where the integral now 
extends over the entire real line, and proceeding to the 
Fourier transform, one obtains the distribution 
\be \label{e.b.7}
p \mapsto \widetilde{m}_{z_{\tau, \bsx, \bsy}}(p) \coloneqq 
(2 \pi)^{-2} \int_{-\infty}^\infty \! du \, \dot{z}_{\tau, \bsx, \bsy}(u) \, 
e^{i p z_{\tau, \bsx, \bsy}(u)} \, . 
\ee
Integrating it with the test functions
$\vartheta_0, \vartheta_1$, and $\sigma$, one
arrives at $\widetilde{m}_{reg}$. 
Its behavior for small momenta is determined by the
singularities of $\widetilde{m}_{z_{\tau, \bsx, \bsy}}$. 
The leading singularity is due to 
the asymptotically light-like parts of  $z_{\tau, \bsx, \bsy}$, \ie 
$u \mapsto l_{\tau, \bsx, \bsy}(u) \coloneqq  u l + r$,
cf.\ equation~\eqref{e.b.2}. 
As we will see, the transverse part of the resulting
distribution vanishes. So this distribution
does not contribute to the radiation field $\widetilde{m}_{reg}^\perp$,
in accordance with the fact that charges with constant velocity
do not radiate. We may therefore subtract it
from equation \eqref{e.b.7}. The resulting difference is governed by the 
acceleration $a$ in  equation~\eqref{e.b.2} that 
gives rise to a moderate logarithmic
singularity if $p$ becomes parallel to $l_{\tau, \bsx, \bsy}$.
It  implies that the restriction of the transverse function 
$\widetilde{m}_{z_{\tau, \bx, \by}}^\perp$ to positive lightlike
momenta is square integrable with regard to $d \bsp / | \bsp |$
in a neighborhood of the origin. 

\medskip
Turning to the details, we replace 
in equation \eqref{e.b.7} the path $z_{\tau, \bsx, \bsy}$
with~$l_{\tau, \bsx, \bsy}$. It yields the distribution
$p \mapsto (2 \pi)^{-1} \delta(l p) e^{ipr} l$. 
Its restriction to light-like momenta
vanishes unless~$p$ is parallel to $l$. 
Hence the projection onto its 
transverse component vanishes and we can therefore subtract
this term from \eqref{e.b.7}. 
To determine the effect of the acceleration~$a$ in
\eqref{e.b.2}, we consider the truncated 
paths $u \mapsto a_{\tau, \bsx, \bsy} \coloneqq u l + r + u^{-1} a$
for $\pm u \geq 1$. The difference between the
corresponding distribution and the one 
for the light-like path is, for~$u \geq 1$, 
\be \label{e.b.8}
p \mapsto \int_1^\infty \! du \, \big( \dot{a}_{\tau, \bsx, \bsy}(u)
e^{i p {a}_{\tau, \bsx, \bsy}(u)} 
- \dot{l}_{\tau, \bsx, \bsy}(u) e^{i p {l}_{\tau, \bsx, \bsy}(u)} \big) \, ,
\ee
and a similar integral is obtained for $u \leq -1$.

\medskip 
The modulus of the  integral \eqref{e.b.8} is bounded by
\begin{align} \label{e.b.9}
& \Big| \int_1^\infty \! du \,  
  ( (l  - u^{-2} a) \, e^{ip (u l + u^{-1}a)} - l  e^{iu p l} \big)
  \Big| \nonumber \\
&  \leq |a| \int_1^\infty \! \! du \, u^{-2} +
  | l | \, \Big| \! \int_1^\infty \! du \, e^{iu pl} (1 - e^{i u^{-1} p a} )
  \Big| \nonumber \\
& \leq \big(|a| + 4 | l | |p a|^2 \big) \int_1^\infty \! \! du \, u^{-2} + 
  | l | \, | pa | \, \Big| \!
  \int_1^\infty \! du \, u^{-1} e^{iu pl} \Big| \nonumber \\
& = \big(|a| + 4 | l | |p a|^2 \big)
  +  | l | \, | pa | \, \Big| \! 
  \int_{p l}^\infty \! du \, u^{-1} e^{iu} \Big| \, .
\end{align}
The integral in the last line is finite if $pl > 0$
and diverges logarithmically
if $pl$ approaches~$0$. A similar estimate holds for the
integral over the negative reals. 
Thus the restriction of the function
\eqref{e.b.8} to positive light-like momenta
is square integrable  with regard to $d\bsp/| \bsp |$
for small $\bsp$, and the same holds for the 
integral over the negative reals. 

\medskip
The distribution determined by the original path 
$u \mapsto z_{\tau, \bsx, \bsy}(u)$ has the same
singularities for small momenta as the one determined by
$u \mapsto a_{\tau, \bsx, \bsy}(u)$.
To verify this, one replaces in equation \eqref{e.b.8}
the path $a_{\tau, \bsx, \bsy}$ by $z_{\tau, \bsx, \bsy}$
and $l_{\tau, \bsx, \bsy}$ by $a_{\tau, \bsx, \bsy}$.
The modulus of the resulting difference can be estimated by 
\begin{align} \label{e.b.10} 
& \big| \int_{1}^\infty \! du \,
  \big( \dot{z}_{\tau,\bsx,\bsy}(u)
  e^{ipz_{\tau, \bsx, \bsy}(u)} - \dot{a}_{\tau,\bsx,\bsy}(u)
  e^{ipa_{\tau, \bsx, \bsy}(u)} \big| \nonumber \\ 
  & \leq \int_{1}^\infty \! du \,
  | \dot{z}_{\tau, \bsx, \bsy}(u) -  \dot{a}_{\tau, \bsx, \bsy}(u)|
+ (| l | + | a |)| p | \int_{1}^\infty \! du \,
|z_{\tau, \bsx, \bsy}(u) -  a_{\tau, \bsx, \bsy}(u) | \, .
\end{align}
Now  according to relation \eqref{e.b.2} one has  
$u \mapsto \big( z_{\tau, \bsx, \bsy}(u) -  a_{\tau, \bsx, \bsy}(u) \big) = 0(u^{-2})$
for large~$u$.  
The derivative with regard to~$u$ behaves like
$ 0(u^{-3})$. So the integral is bounded for small~$p$ and this  
holds for its transverse part as well.
The same result obtains for the integral over the
negative reals, $u \leq - 1$.

\medskip
The difference  between the distribution determined by the intermediate part 
of $z_{\tau, \bsx, \bsy}$, cf.\ equation \eqref{e.b.4},
and the lightlike path can be estimated by  
\begin{align} \label{e.b.11} 
p & \mapsto \Big|  \int_{-1}^1 \! du \,
  \big( \dot{z}_{\tau,\bsx,\bsy}(u)
  e^{ipw_{\tau, \bsx, \bsy}(u)} - \dot{l}_{\tau,\bsx,\bsy}(u)
  e^{ipl_{\tau, \bsx, \bsy}(u)} \big)  \Big| \nonumber \\
  & \leq 2 \, ( |w_{\tau, \bsx, \bsy}(1) -
  \underline{w}_{\tau, \bsx, \bsy}(-1)| + | l | )
  \leq 2(\sqrt{\tau^2 + \bx^2 + 2 \by^2} + | l|) 
  \, .
\end{align}

\medskip
The preceding three estimates hold for 
$\tau, \bx$, and $\by$ in the
compact support of the test functions. The results
show that the difference
$p \mapsto (\widetilde{m}_{z_{\tau, \bx, \by}} - \widetilde{m}_{l_{\tau, \bx, \by}})(p)$,
restricted to positive light-like momenta, is square integrable for
small momenta with regard to the measure~$d\bp/|\bp|$.
A straightforward examination of the
corresponding integrals also shows that
they are uniformly bounded with respect to $\tau, \bx, \by$ varying
in the supports of the given test functions. 
So~$\widetilde{m}_{reg}^\perp$ is square integrable
with regard to the measure in a neighborhood of the origin
as well. Summarizing these observations, it follows
that~$\widetilde{m}_{reg}^\perp \in L_0(\RR^4)$, completing the
proof of Proposition \ref{p.3.2}. 

\section{Appendix} \label{sectC} 
\setcounter{equation}{0}

In this appendix we provide the proof of Lemma \ref{l.4.1},   
which states that every state $\omega_0 \beta_{\bl}$ on $\, \pmb \efV$ 
with $\bl \in L_1(\RR^4)$ induces
a representation in which the spacetime translations are
unitarily implemented so that the relativistic
spectrum condition holds. A similar result was obtained by 
Roepstorff \cite{Ro} who considered the
transverse  radiation fields. Since
we are also interested in the gauge bridges, we need to consider
in addition the longitudinal components. Instead of adopting the
strategy of Roepstorff for the proof, we outline here a slightly different
argument that reveals the ingredients entering in this result. 

\medskip
We begin by noting that the GNS-representation of the
algebra $\, \pmb{\efV}$ induced by
$\omega_0 \beta_{\bl}$  
can be realized in the vacuum representation induced by $\omega_0$.
There it is given by the action of $\beta_{\bl}$ on the algebra $\, \pmb{\efV}$
in the vacuum Hilbert space. If $\bl$ is an element of the
subspace $L_0(\RR^4) \subset L_1(\RR^4)$, the corresponding 
$\beta_{\bl}$ is given by the adjoint action of the
unitary operators $e^{i \bA(\bl)}$. It follows that 
the unitary spacetime translations in this representation 
are given by, $x \in \RR^4$, 
\begin{align} \label{e.c.1}
U_{\bl}(x) & = e^{i \bA(\bl)} U_0(x) e^{- i \bA(\bl)} \nonumber \\
&  = e^{i \bA(\bl)}  e^{- i \bA(\bl(x))} \, U_0(x) =
e^{-(i/2)\langle \bl, D \, \bl(x) \rangle } e^{i \bA(\bl - \bl(x))} \, U_0(x) \, .
\end{align}
Here $U_0$ is the representation of the translations in the vacuum
representation and $\bl(x)$ denotes the function $\bl$, translated by $x$.
In the third equality we made use of the Weyl relations \eqref{e.2.2}. 
The first equality shows that $U_{\bl}$ is a continuous unitary
representation of the translations which satisfies the relativistic
spectrum condition. Turning to the last term, note that
\be
-(i/2)\langle \bl, D \, \bl(x) \rangle
= (i/2)\langle \bl, D \, (\bl - \bl(x)) \rangle
= i \mbox{Im} \, \langle \bl, (\bl - \bl(x)) \rangle_0 \, ,
\ee
where $\langle \cdot  \hspace{2pt} , \cdot \rangle_0$ is the scalar product
in~$L_0(\RR^4)$. Since small momenta in $\bl$ are suppressed in the 
differences $(\bl - \bl(x))$ appearing in the last term of \eqref{e.c.1},  
one can extend
this relation to all $\bl \in L_1(\RR^4)$, while retaining the
properties of $U_{\bl}$. We briefly sketch the argument.

\medskip
Given any $\bl \in L_1(\RR^4)$, we approximate it by elements
$\bl_\varepsilon \in L_0(\RR^4)$, $\varepsilon > 0$.
Choosing a smooth function $p \mapsto \chi_\varepsilon(p)$
which is equal to $0$ for $|p| \leq \varepsilon$ and equal to
$1$ for $| p | \geq 2 \varepsilon$, these approximants are defined
in momentum space by 
$p \mapsto \widetilde{\bl_\varepsilon}(p) \coloneqq \chi_\varepsilon(p)
\widetilde{\bl}(p)$. So $\bl_\varepsilon$ converges
to $\bl$ in $L_1(\RR^4)$ in the limit of small $\varepsilon$, hence
$\beta_{\bl_\varepsilon}$ converges pointwise in norm to~$\beta_{\bl}$
on the Weyl operators spanning $\, \pmb{\efV}$. The essential point
is that $(\bl_\varepsilon - \bl_\varepsilon(x))$ converges
strongly in $L_0(\RR^4)$, uniformly
for $x$ in compact sets $\mC$.  This is apparent in momentum space and we 
omit the proof. It follows that the unitary Weyl operators
$e^{i A(\bl_\varepsilon -  \bl_\varepsilon(x))}$ converge 
for small $\varepsilon$ in the strong operator topology,
uniformly on $\mC$. Similarly, one
finds that the phases  
$\langle \bl_\varepsilon , (\bl_\varepsilon - \bl_\varepsilon(x) \rangle $
converge.
So the operators $U_{\bl_\varepsilon}(x)$
converge in the limit of small $\varepsilon$ in the strong operator
topology, uniformly for $x \in \mC$. Their limits
define unitary operators $U_{\bl}(x)$, $x \in \RR^4$. They form
a continuous unitary representation of the spacetime translations
which satisfies the relativistic spectrum condition. It also 
follows from the preceding discussion
that their adjoint action generates the translations in the
representation $\beta_{\bl}$. 
This completes the proof of Lemma \ref{l.4.1}. 

\end{appendix}

\section*{Acknowledgment}
DB  gratefully acknowledges the 
support of Roberto Longo and the 
University of Rome ``Tor Vergata'', which made
this collaboration possible. He is also grateful
to Dorothea Bahns and the Mathematics Institute
of the University of G\"ottingen for their continuing
hospitality. EV, FC and GR are grateful to Karl-Henning Rehren
for his kind hospitality and support at the University of
G{\"o}ttingen. FC and GR also acknowledge GNAMPA-INdAM.
FC acknowledges the PRIN PNRR 2022 CUP E53D2301806 0001.
GR acknowledges the MIUR Excellence Department Project awarded
to the Department of Mathematics, University of Rome Tor Vergata,
CUP E83C23000330006.


\begin{thebibliography}{99} 

\bibitem{BuCiRuVa2016}
  D.\ Buchholz, F.\ Ciolli, G.\ Ruzzi and E.\ Vasselli,
  ``The universal C*-algebra of the electromagnetic field'', 
Lett.\ Math.\ Phys.\ \textbf{106} (2016) 269-285 

\bibitem{BuCiRuVa2021}
D.\ Buchholz, F.\ Ciolli, G.\ Ruzzi and E.\ Vasselli,   
``The universal algebra of the electromagnetic field III. Static
  charges and emergence of gauge fields'',
Lett.\ Math.\ Phys.\ (2022) 112:27

\bibitem{BuCiRuVa2022}
D.\ Buchholz, F.\ Ciolli, G.\ Ruzzi and E.\ Vasselli, 
``Gauss{'s} law, the manifestations of gauge fields,
  and their impact on local observables'', 
\textit{Trails in Modern Theoretical and Mathematical Physics.
A Volume in Tribute to Giovanni Morchio}, A.\ Cintio and A.\ Michelangeli
Eds., Springer 2023

\bibitem{BuDoMoRoSt} D.\ Buchholz, S.\ Doplicher, G.\ Morchio, J.E.\
  Roberts and F.\ Strocchi, ``Quantum delocalization of the
  electric charge'', Ann.\ Phys.\ (NY) \textbf{290} (2001) 53-56
  
\bibitem{BuRo2014}
D.\ Buchholz and J.E.\ Roberts,
``New light on infrared problems: sectors, statistics,
  symmetries and spectrum'',
  Commun.\ Math.\ Phys.\  \textbf{330} (2014) 935-972

\bibitem{CaDy} D.\ Cadamuro and W.\ Dybalski,
``Curing velocity superselection in
non-relativistic QED by restriction to a
lightcone'', Ann.\ Henri Poincar\'e \textbf{21} (2020), 2877–2896

\bibitem{ReSi} R.\ Reed and B.\ Simon,
  \textit{Methods of Modern Mathematical Physics III:
    Scattering Theory}, Elsevier (1979)

\bibitem{Ro} G.\ Roepstorff, ``Coherent photon states and spectral
  condition'', Commun.\ Math.\ Phys.\ \textbf{19} (1970) 301-314

\bibitem{St} F.\ Strocchi, \textit{An Introduction to Non-Perturbative
  Foundations of Quantum Field Theory}, Oxford University Press, USA (2013)

\end{thebibliography}
\end{document}